\newtheorem{theorem}{Theorem}
\newtheorem{lemma}{Lemma}
\newtheorem{definition}{Definition}
\newtheorem{remark}{Remark}
\newcommand{\myexpect}[1]{\mathbb{E}[#1]}
\newcommand{\tr}{\text{tr}}
\begin{document}

\title{Stability Enforced Bandit Algorithms for Channel Selection in Remote State Estimation of Gauss-Markov Processes}
\author{Alex S.\ Leong, \IEEEmembership{Member, IEEE}, Daniel E.\ Quevedo, \IEEEmembership{Fellow, IEEE}, and Wanchun Liu, \IEEEmembership{Member, IEEE} \\  
 \thanks{A.\ S.\ Leong is with Defence Science and Technology Group, Melbourne, Australia. E-mail: {\tt alex.leong@defence.gov.au.} D.~E.~Quevedo is with the
  School of Electrical Engineering and Robotics,
  Queensland University of Technology, Brisbane, Australia. E-mail: {\tt dquevedo@ieee.org}. W.\ Liu is with the School
of Electrical and Information Engineering, University of Sydney, Sydney, Australia. Email: {\tt wanchun.liu@sydney.edu.au.}}
  }

\maketitle

\begin{abstract}
In this paper we consider the  problem of remote state estimation  of a Gauss-Markov process,  where a sensor can, at each discrete time instant, transmit on one out of \textit{M} different communication channels. A key difficulty of the situation at hand is that  the channel statistics are unknown. We study the case where both learning of the channel reception probabilities and state estimation is carried out simultaneously. Methods for choosing the channels based on techniques for multi-armed bandits are presented, and shown to provide stability. Furthermore, we define the performance notion of estimation regret, and derive bounds on how it scales with time for the considered algorithms. 
\end{abstract}

\begin{IEEEkeywords} 
Learning, multi-armed bandits, regret, stability, state estimation
\end{IEEEkeywords}

\section{Introduction}
The use of machine learning techniques for estimation and control has gained increasing attention in recent years \cite{Busoniu,Hewing_review,special_issue_CSS_letters}. Learning based estimation and control holds the promise of enabling the solution of  problems which are difficult or even intractable using traditional control design techniques. 

In this paper we focus on the  problem of remote state estimation of a Gauss-Markov process,  where the channel statistics are  constant but unknown.  As a motivating example, consider the use of unmanned aerial vehicles (UAVs) for providing wireless communication capabilities \cite{ZengZhangLim,Mohamed_UAV}, with the UAV acting as a mobile aerial base station collecting information from wireless sensors. The use of such systems can allow for monitoring of processes which are difficult to access otherwise, e.g., in a hostile or contested environment. The UAV would transmit collected information to a remote estimator via one of $M$ different channels, e.g., using $M$ different frequency bands. Characteristics of the individual channels can vary substantially at different locations/environments~\cite{ZengZhangLim,Li_MAB_survey}, which makes it difficult to have accurate knowledge of the channel statistics whenever the UAV moves to a different location. 
In this paper we will assume that, at any given location, the channel statistics are constant but unknown.

Learning of transmission schedules for a single system over a channel with unknown packet reception probability has been studied in~\cite{WuRen_learning}, while learning of power allocations for multiple control systems connected over an unknown non-stationary channel is considered in~\cite{Eisen_TSP}. 
With multiple channels, channel allocation for multiple processes using deep reinforcement learning  have been considered in \cite{LeongRamaswamy,RedderRamaswamy,liu2021DRL}. While the use of deep reinforcement learning techniques can find schedules which perform well, in general stabilizing properties of the learned policies are not guaranteed. Furthermore, many training samples may be needed during learning. 
In this paper, we consider a remote state estimation where, at every time instant,  a sensor or controller can choose one of $M$ unknown channels for transmission. The aim is to  learn the best channel while guaranteeing stability of the system and being \emph{sample efficient} (via maintaining  \emph{low regret}). 

We will view the channel selection task as a multi-armed bandit type problem \cite{Slivkins_book,Lattimore_book}. The multi-armed bandit problem is a simple example of a reinforcement learning problem which  captures the exploration vs.\ exploitation tradeoff, between basing decisions only on knowledge currently obtained about a system (``exploit'') vs.\ trying out alternative decisions which may potentially be better (``explore''). 
Among many applications in a wide variety of domains, multi-armed bandit techniques have been used to address scheduling problems in wireless communications \cite{GaiKrishnamachariJain_DySPAN,Li_MAB_survey,StahlbuhkShraderModiano,TakeuchiHasegawa}.

In contrast to classical multi-armed bandit problems, here the underlying processes are dynamical systems, which raises additional challenges on stability and performance. In this paper, we will study the use of the $\varepsilon$-greedy algorithm \cite{SuttonBarto}, as well as several algorithms based on the Thompson Sampling technique \cite{Russo_tutorial}, to carry out channel selection in the context of remote state estimation. 
Thompson Sampling is a sampling based algorithm for the multi-armed bandit problem, which has been shown to be optimal with respect to the accumulated regret in certain problems. 
Interestingly, Thompson sampling has  been previously used to address problems in stochastic \cite{Kim_TS,BanjevicKim} and linear quadratic control~\cite{OuyangGagraniJain}. 

\emph{Summary of contributions}:
The main contributions of the current work are as follows: 
\begin{itemize}
\item We propose the use of bandit algorithms for channel selection in the context of remote state estimation, including a new stability-aware Bayesian sampling algorithm. 
\item We prove that the considered algorithms can guarantee stability of the remote estimation process. 
\item We study performance of these algorithms by introducing the notion of estimation  regret, and prove that the $\varepsilon$-greedy algorithm has linear estimation regret, while the sampling based algorithms can achieve logarithmic estimation regret. 
\end{itemize}

The remainder of the paper is organized as follows: The system model is provided in Section \ref{sec:system_model}. A selection of bandit algorithms for channel selection are presented in Section \ref{sec:bandit_algorithms}. Stability and performance bounds of these algorithms are studied in Sections \ref{sec:stability} and \ref{sec:regret_bounds} respectively. 
Numerical studies are given in Section \ref{sec:numerical}. Section~\ref{sec:conclusion} draws conclusions.

\emph{Notation}: Given a matrix $X$, we say that $X \geq 0$ if $X$ is positive semi-definite. For matrices $X$ and $Y$, we say that $X \leq Y$ if $Y - X \geq 0$. The spectral radius of a matrix $X$ is denoted by $\rho(X)$. The beta distribution with parameters $\alpha$ and $\beta$ is denoted by $\textnormal{Beta}(\alpha,\beta)$, with expected value $\alpha/(\alpha+\beta)$. $\mathds{1}(\cdot)$ is the indicator function that returns 1 if its argument is true and 0 otherwise. 

\section{System Model}
\label{sec:system_model}
 \begin{figure}[t!]
\centering 
\includegraphics[scale=0.4]{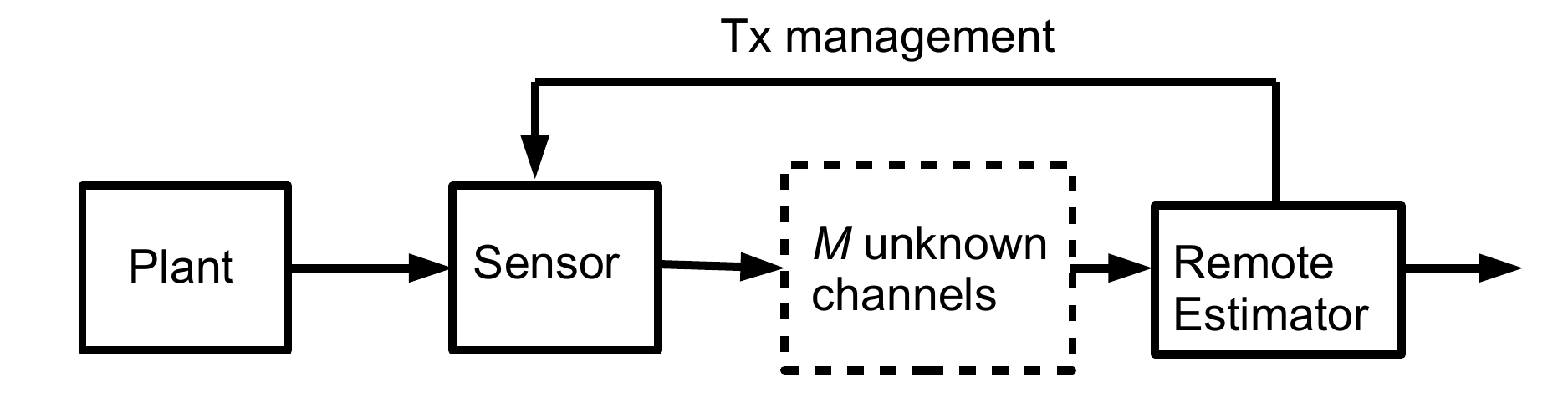} 
\caption{Remote state estimation over unknown channels}
\label{fig:system_model}
\end{figure} 
We consider a discrete-time plant  described by a Gauss-Markov process: 
\begin{equation}
    \label{eqn:sys}
    x_{k+1} = A x_k + w_k,
\end{equation}
with sensor measurements
$$y_k = C x_k + v_k,$$
where $w_k \sim \mathcal{N}(0,Q)$, $v_k \sim \mathcal{N}(0,R)$, see Fig.\ \ref{fig:system_model}.
The sensor   runs a local Kalman filter, with local state estimates $\hat{x}_k^s \triangleq \mathbb{E}[x_k|y_0,\dots,y_k]$ and estimation error covariances $P_k^s \triangleq \mathbb{E}[(x_k - \hat{x}_k^s)(x_k - \hat{x}_k^s)^T|y_0,\dots,y_k]$. We assume that the pair $(A,C)$ is observable and the pair $(A, Q^{1/2})$ is controllable, which implies that $P_k^s$ converges to some $\overline{P}$ as $k \rightarrow \infty$ \cite{andersonMoore}. We will assume steady state, so that $P_k^s = \overline{P}, \forall k$.

As depicted in Fig.\ \ref{fig:system_model}, the local state estimates are transmitted over lossy channels to a remote estimator. There are $M$ such channels, each i.i.d. Bernoulli with packet reception probabilities $\theta_m, m=1,\dots,M$. For notational simplicity, we will assume that all the packet reception probabilities are different, i.e., $\theta_i \neq \theta_j, \forall i\neq j$.  

\par At every time step $k$, we can choose one of the $M$ channels to transmit over. This decision is determined by the remote estimator, which notifies the sensor via a transmission (Tx) management channel. As the information transmitted on the transmission management channel is normally only a few bits in length, we assume that these transmissions are error-free, see also \cite{LeongRamaswamy}. 

A key issue is  that none of the packet reception probabilities are known. Ideally, we would like to always use the best channel (i.e., the channel with the highest packet reception probability). But since the packet reception probabilities are unknown, this requires us to find/learn the best channel by trying out different channels and observing their outcomes. The situation  resembles a multi-armed bandit type problem \cite{Slivkins_book,Lattimore_book}, with the channels acting as the different arms. The difference between the situation considered in this paper and classical multi-armed bandit problems is that  the underlying estimation process has dynamics. This raises additional issues such as stability, as investigated in the current work.

Before proceeding, we recall (see, e.g., \cite{ShiEpsteinMurray}) that the remote estimator has estimation error covariance
\begin{equation}
\label{remote_estimator_eqns}
P_k = \left\{\begin{array}{cc} \overline{P}, & \gamma_k = 1 \\ h(P_{k-1}), & \gamma_k=0  \end{array} \right.
\end{equation}
where 
\begin{equation}
\label{eqn:h_defn}
h(X) \triangleq A X A^\top + Q.
\end{equation}
In the above, $\gamma_k = 1$ means that the transmission at time $k$ was successful, while $\gamma_k=0$ means that the transmission was lost. We assume that $P_0 = \overline{P}.$

\par When $A$ is unstable, we know that if the same channel $m$ is chosen at every time step, then the remote estimator has bounded expected estimation error covariance for all $k$ if and only if its packet reception probability satisfies $\theta_m > \theta_c$ \cite{XuHespanha,Schenato}, where 
\begin{equation}
    \label{eqn:crit}
    \theta_c \triangleq 1 - \frac{1}{\rho(A)^2}.
\end{equation} 
Accordingly, in this paper we make the assumption that at least one of the channels has packet reception probability larger than $\theta_c$, i.e., 
\begin{equation}
\label{best_channel_condition}
\theta^* \triangleq \max_{m} \theta_m > \theta_c.
\end{equation}

\subsection{Preliminaries}

For later reference, we mention some properties of the operator $h(.)$ defined in \eqref{eqn:h_defn}.
\begin{lemma}
\label{lemma:h_properties}
The operator $h(.)$ defined in \eqref{eqn:h_defn} satisfies: \\
(i) $h(X) \leq h(Y) $ if $X \leq Y$ \\
(ii) $h(\overline{P}) \geq \overline{P}$\\
(iii) $\textnormal{tr} (h(\overline{P})) > \textnormal{tr} (\overline{P})$.
\end{lemma}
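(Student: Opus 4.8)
\emph{Proof plan.} Part~(i) I would dispatch directly from the definition of the matrix order: if $X \le Y$ then $Y-X \ge 0$, and the congruence map $Z \mapsto AZA^\top$ preserves positive semidefiniteness, so $A(Y-X)A^\top \ge 0$; adding $AXA^\top+Q$ to both sides and rearranging gives $h(X) \le h(Y)$. This is essentially a one-line computation and needs no structural assumptions on $A$.

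For parts~(ii) and~(iii) the plan is to extract a single steady-state Riccati identity from the sensor's Kalman filter and read both claims off from it. Writing the filter as a prediction step followed by a measurement update, the one-step prediction covariance is $AP_k^sA^\top + Q = h(P_k^s)$ and the update subtracts a positive semidefinite correction; evaluating at steady state, where $P_k^s = \overline P$ for all $k$, gives
\begin{equation}
\label{eqn:ricc_identity}
\overline P \;=\; h(\overline P) - h(\overline P)C^\top\big(Ch(\overline P)C^\top + R\big)^{-1}Ch(\overline P),
\end{equation}
i.e.\ $h(\overline P) - \overline P = G$, where $G \triangleq B^\top M^{-1}B$ with $B \triangleq Ch(\overline P)$ and $M \triangleq Ch(\overline P)C^\top + R \succ 0$. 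Since $G \ge 0$, this is exactly part~(ii).

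For part~(iii) I would take traces in $h(\overline P) - \overline P = G$, obtaining $\tr(h(\overline P)) - \tr(\overline P) = \tr(G) \ge 0$; it then remains to upgrade this to a strict inequality, i.e.\ to show $G \ne 0$. Because $M \succ 0$, $G = 0$ would force $B = Ch(\overline P) = 0$; but $\overline P \succ 0$ (standard under controllability of $(A,Q^{1/2})$, cf.\ \cite{andersonMoore}) together with part~(ii) gives $h(\overline P) \ge \overline P \succ 0$, so $Ch(\overline P) = 0$ would force $C = 0$, contradicting observability of $(A,C)$. Hence $\tr(G) > 0$. The only genuinely non-formal step is this last one: monotonicity and the non-strict trace inequality are consequences of the Riccati structure alone, whereas ruling out $h(\overline P) = \overline P$ is where the hypotheses on $(A,C)$, $(A,Q^{1/2})$ and on $R \succ 0$ have to be invoked, so that is the step I would be most careful about. (In the unstable-$A$ regime of interest here one could alternatively argue that $\overline P = A\overline P A^\top + Q$ admits no bounded positive definite solution, but the observability argument is cleaner and does not require $\rho(A)>1$.)
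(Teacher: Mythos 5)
Your proposal is correct, but it is more self-contained than the paper's own treatment: the paper proves part (i) exactly as you do (congruence by $A$ preserves positive semidefiniteness) and then simply cites an external reference for parts (ii) and (iii), whereas you derive both from the steady-state Riccati identity $\overline P = h(\overline P) - h(\overline P)C^\top\bigl(Ch(\overline P)C^\top + R\bigr)^{-1}Ch(\overline P)$. That identity is the right tool and your chain of implications is sound: the subtracted term is positive semidefinite, giving (ii); and for (iii) you correctly identify that the only real work is ruling out $h(\overline P)=\overline P$, which you do by noting that the correction term vanishing forces $Ch(\overline P)=0$, that $h(\overline P)\geq \overline P$ with $\overline P$ positive definite (a standard consequence of controllability of $(A,Q^{1/2})$ together with $R>0$) makes $h(\overline P)$ invertible, and that $C=0$ contradicts observability of $(A,C)$. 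The only hypotheses you lean on implicitly are $R>0$ (needed both for the inverse in the Riccati identity and for positive definiteness of the filtered covariance) and the standing observability/controllability assumptions, all of which the paper does assume. In short, your argument buys a complete proof where the paper offers a pointer; the underlying mathematics is the same as in the cited source, and your identification of the strict-inequality step in (iii) as the only non-formal point is accurate.
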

\begin{proof}
The proof of (i) follows easily from the definition~\eqref{eqn:h_defn}. Proofs of (ii) and (iii) can be found in \cite{ShiZhang}.
\end{proof}

Note that Lemma \ref{lemma:h_properties} implies that $P_k \geq \overline{P}, \forall k$, for $P_k$ evolving according to \eqref{remote_estimator_eqns}. 

The following recursion \eqref{eqn:EP_recursion} for the expected error covariances will also be used later. Let $\theta(k)$ be the reception probability at time $k$.  From \eqref{remote_estimator_eqns} we have
\begin{align*}
\mathbb{E}[P_{k+1}|P_k] & = \mathbb{E}[P_{k+1}|P_k, \gamma_k = 1] \mathbb{P}(\gamma_k=1) \\ & \quad + \mathbb{E}[P_{k+1}|P_k, \gamma_k = 0] \mathbb{P}(\gamma_k=0)\\
& =  \theta(k) \overline{P} + (1-\theta(k)) (A P_k A^\top + Q). 
\end{align*}
Then, since $h(.)$ is linear,  we can obtain
\begin{equation}
\label{eqn:EP_recursion}
\mathbb{E}[P_{k+1}] = \mathbb{E}[\mathbb{E}[P_{k+1}|P_k]] =  \theta(k) \overline{P} + (1-\theta(k)) h(\mathbb{E}[P_k]).
\end{equation}

\section{Bandit Algorithms for Channel Selection}
\label{sec:bandit_algorithms}
In this section we present a selection of multi-armed bandit algorithms for carrying out  channel selection, mainly based on the $\varepsilon$-greedy and Thompson sampling approaches \cite{Russo_tutorial}. Thompson sampling  is computationally efficient and can be easily implemented, requiring very little tuning apart from the choice of a prior distribution. There exist other bandit algorithms such as those based on the upper confidence bound (UCB) approach, which are not considered in the current work. Numerical studies however have shown that  in many situations the performance of Thompson sampling  is highly competitive against even well-tuned UCB algorithms~\cite{Granmo,MayKorda,Fatale_AOI}. 
\begin{savenotes}
\begin{algorithm}[t]
\caption{$\varepsilon$-greedy}
\label{alg:epsilon_greedy}
\begin{algorithmic}[1]
\State \textbf{Initialize}: $\alpha_{m,1} = 1, \beta_{m,1} = 1, \quad m=1,\dots,M$
\For{$k = 1, 2, \dots$}
    \State Set $\hat{\theta}_{m,k}^{\varepsilon} = \frac{\alpha_{m,k}}{\alpha_{m,k}+\beta_{m,k}}, \quad m=1,\dots,M$
    \State With probability $\varepsilon > 0$, choose $m_{k}^{\varepsilon}$ uniformly from $\{1,\dots,M\},$
    else choose\footnote{If multiple channels have the same maximum estimated packet reception probability, then a channel is chosen uniformly at random from these (maximizing) channels.} $m_{k}^{\varepsilon} = \textnormal{arg}\max_{m} \hat{\theta}_{m,k}^{\varepsilon}$ 
    \State Observe $\gamma_k$
    \State Update sample means via
    \begin{align*}
        \alpha_{m,k+1} & = \alpha_{m,k} + \gamma_k, & \textnormal{ if } m = m_{k}^{\varepsilon} \\
        \beta_{m,k+1} &= \beta_{m,k} + 1 - \gamma_k, & \textnormal{ if } m = m_{k}^{\varepsilon} \\
        \alpha_{m,k+1} & = \alpha_{m,k},  \, 
        \beta_{m,k+1} = \beta_{m,k}, & \textnormal{ if } m \neq m_{k}^{\varepsilon}
    \end{align*}
\EndFor
\end{algorithmic}
\end{algorithm} 
\end{savenotes}

\subsection{$\varepsilon$-greedy}
\label{sec:epsilon_greedy}
We first consider the well-known $\varepsilon$-greedy algorithm \cite{SuttonBarto}, which with probability $\varepsilon > 0$ chooses a channel uniformly at random (``explore''), and with probability $1-\varepsilon$ chooses the channel with the maximum estimated (via a sample mean) packet reception probability (``exploit''). After transmitting over this channel and observing $\gamma_k$, i.e., whether the transmission was successful, we update the sample means of the packet reception probabilities. The procedure is stated  in Algorithm \ref{alg:epsilon_greedy}, where we have  expressed the estimated packet reception probabilities in terms of the quantities $\alpha_{m,k}$ and $\beta_{m,k}$. These can be regarded as pseudo-counts of the number of successful and non-successful transmissions  respectively on channel $m$ \cite{Russo_tutorial}, and are updated using the formulas $\alpha_{m,k+1}  = \alpha_{m,k} + \gamma_k$ and $\beta_{m,k+1} = \beta_{m,k} + 1 - \gamma_k$.  This allows us to emphasize   similarities in the implementation with the other algorithms considered below. 

\subsection{Thompson Sampling}
Thompson sampling is a sampling based approach which draws samples $\hat{\theta}_{m,k}^{\textnormal{TS}}, m=1,\dots,M$ from the posterior distribution of the unknown packet reception probability of each arm/channel. The posterior distribution can be shown to have a beta distribution if the prior distribution is beta distributed~\cite{Russo_tutorial}, as we will assume here. In particular, the prior beta distribution with $\alpha_{m,1} = 1, \beta_{m,1} = 1, \, m=1,\dots,M$ is uniformly distributed in $[0,1]$. The channel that is selected is then given by $$m^{\textnormal{TS}}_k = \textnormal{arg}\max_{m} \hat{\theta}_{m,k}^{\textnormal{TS}}.$$
After transmitting over this channel and observing $\gamma_k$,  the posterior  is then updated to a $\textnormal{Beta}(\alpha_{m,k+1}, \beta_{m,k+1})$ distribution, using the same formulas for updating $\alpha_{m,k}$ and $\beta_{m,k}$ as in Section \ref{sec:epsilon_greedy}.  The procedure is summarized in Algorithm \ref{alg:TS}. 

\begin{algorithm}[t]
\caption{Thompson Sampling (TS)}
\label{alg:TS}
\begin{algorithmic}[1]
\State \textbf{Initialize}: $\alpha_{m,1} = 1, \beta_{m,1} = 1, \quad m=1,\dots,M$
\For{$k = 1, 2, \dots$}
    \State Sample $\hat{\theta}_{m,k}^{\textnormal{TS}} \sim \textnormal{Beta}(\alpha_{m,k}, \beta_{m,k}), \quad m=1,\dots,M$
    \State Choose $m_{k}^{\textnormal{TS}} = \textnormal{arg}\max_{m} \hat{\theta}_{m,k}^{\textnormal{TS}}$ and observe $\gamma_k$
    \State Update posteriors to $\textnormal{Beta}(\alpha_{m,k+1}, \beta_{m,k+1})$ via 
    \begin{align*}
        \alpha_{m,k+1} & = \alpha_{m,k} + \gamma_k, & \textnormal{ if } m = m_{k}^{\textnormal{TS}} \\
        \beta_{m,k+1} &= \beta_{m,k} + 1 - \gamma_k, & \textnormal{ if } m = m_{k}^{\textnormal{TS}} \\
        \alpha_{m,k+1} & = \alpha_{m,k},  \, 
        \beta_{m,k+1} = \beta_{m,k}, & \textnormal{ if } m \neq m_{k}^{\textnormal{TS}}
    \end{align*}
\EndFor
\end{algorithmic}
\end{algorithm} 

\subsection{Optimistic Bayesian Sampling}
In \cite{MayKorda} a modification of Thompson sampling called ``Optimistic Bayesian sampling'' (OBS) was proposed and studied. Recall that in Thompson sampling, at time $k$ one would draw samples $\hat{\theta}_{m,k}^{\textnormal{TS}}, m=1,\dots,M$ and then select $m^{\textnormal{TS}}_k = \textnormal{arg}\max_{m} \hat{\theta}_{m,k}^{\textnormal{TS}}.$
In OBS we would also draw  samples $\hat{\theta}_{m,k}^{\textnormal{TS}}, m=1,\dots,M$, but now select
\begin{align*}
m^{\textnormal{OBS}}_k &= \textnormal{arg}\max_{m} \hat{\theta}_{m,k}^{\textnormal{OBS}},
\end{align*}
where
\begin{align*}
\hat{\theta}_{m,k}^{\textnormal{OBS}} & \triangleq  \max\left(\hat{\theta}_{m,k}^{\textnormal{TS}}, \mathbb{E}[\hat{\theta}_{m,k}^{\textnormal{TS}}]   \right)  =  \max\left(\hat{\theta}_{m,k}^{\textnormal{TS}}, \frac{\alpha_{m,k}}{\alpha_{m,k} \!+\! \beta_{m,k}}   \right). 
\end{align*}
The procedure is summarized in Algorithm \ref{alg:OBS}. 
\begin{algorithm}[t]
\caption{Optimistic Bayesian Sampling (OBS)}
\label{alg:OBS}
\begin{algorithmic}[1]
\State \textbf{Initialize}: $\alpha_{m,1} = 1, \beta_{m,1} = 1, \quad m=1,\dots,M$
\For{$k = 1, 2, \dots$}
    \State Sample $\hat{\theta}_{m,k}^{\textnormal{TS}} \sim \textnormal{Beta}(\alpha_{m,k}, \beta_{m,k}), \quad m=1,\dots,M$
    \State Set $\hat{\theta}_{m,k}^{\textnormal{OBS}} = \max\left(\hat{\theta}_{m,k}^{\textnormal{TS}}, \frac{\alpha_{m,k}}{\alpha_{m,k} \!+\! \beta_{m,k}}   \right)$
    \State Choose $m_{k}^{\textnormal{OBS}} = \textnormal{arg}\max_{m} \hat{\theta}_{m,k}^{\textnormal{OBS}}$ and observe $\gamma_k$
    \State Update posteriors  to $\textnormal{Beta}(\alpha_{m,k+1}, \beta_{m,k+1})$ via 
    \begin{align*}
        \alpha_{m,k+1} & = \alpha_{m,k} + \gamma_k, & \textnormal{ if } m = m_{k}^{\textnormal{OBS}} \\
        \beta_{m,k+1} &= \beta_{m,k} + 1 - \gamma_k, & \textnormal{ if } m = m_{k}^{\textnormal{OBS}} \\
        \alpha_{m,k+1} & = \alpha_{m,k},  \, 
        \beta_{m,k+1} = \beta_{m,k}, & \textnormal{ if } m \neq m_{k}^{\textnormal{OBS}}
    \end{align*}
\EndFor
\end{algorithmic}
\end{algorithm} 
The motivation for OBS is to result in increased selection probabilities for arms which are uncertain \cite{MayKorda}. For classical multi-armed bandit problems, simulations have shown that OBS can outperform Thompson sampling in certain situations \cite{MayKorda,Mellor_thesis}.

\subsection{Stability-aware Bayesian Sampling}
In the context of selecting channels for the purpose of remote state estimation of dynamical systems as in \eqref{eqn:sys}, it is important to take into account the long term estimation performance. In particular, the fundamental bound \eqref{eqn:crit} suggests that, unlike when using OBS,  one should not artificially stimulate the selection probabilities for channels which seem to be poor. More specifically, if the current mean-estimate $\alpha_{m,k}/(\alpha_{m,k} + \beta_{m,k})$ of the reception probability is less than the critical threshold $\theta_c$, then it is questionable whether  this channel should be chosen with a probability higher than that prescribed by Thompson sampling. 

Motivated by the above, in this paper, we propose a \emph{stability-aware modification} of OBS which uses the OBS sample  $\hat{\theta}_{m,k}^{\textnormal{OBS}} $ only if $\alpha_{m,k}/(\alpha_{m,k} + \beta_{m,k}) > \theta_c$, while using the Thompson sample $\hat{\theta}_{m,k}^{\textnormal{TS}}$ otherwise. 
The algorithm, here named \emph{Stability-aware Bayesian Sampling} (SBS), can thus be characterized by: 
\begin{equation}
\label{eqn:OBS_mod}
\hat{\theta}_{m,k}^{\textnormal{SBS}} = \eta_{m,k} \hat{\theta}_{m,k}^{\textnormal{OBS}} + (1-\eta_{m,k}) \hat{\theta}_{m,k}^{\textnormal{TS}}, \quad m=1, \dots,M    
\end{equation}
and 
$$m_k^{\textnormal{SBS}} = \arg\max_m \hat{\theta}_{m,k}^{\textnormal{SBS}},$$
where \begin{equation}
    \label{eqn:obs_mod}
    \eta_{m,k} =
    \begin{cases}
    1&\text{if $\alpha_{m,k}/(\alpha_{m,k} + \beta_{m,k}) > \theta_c$}\\
    0&\text{otherwise.}
    \end{cases}
\end{equation}
The method is summarized in Algorithm \ref{alg:SBS}. 
\begin{algorithm}[t]

\caption{Stability-aware Bayesian Sampling (SBS)}
\label{alg:SBS}
\begin{algorithmic}[1]
\State \textbf{Initialize}: $\alpha_{m,1} = 1, \beta_{m,1} = 1, \quad m=1,\dots,M$
\For{$k = 1, 2, \dots$}
    \State Sample $\hat{\theta}_{m,k}^{\textnormal{TS}} \sim \textnormal{Beta}(\alpha_{m,k}, \beta_{m,k}), \quad m=1,\dots,M$
    \State Set $\hat{\theta}_{m,k}^{\textnormal{SBS}} = \eta_{m,k} \hat{\theta}_{m,k}^{\textnormal{OBS}} + (1-\eta_{m,k}) \hat{\theta}_{m,k}^{\textnormal{TS}}$, where $\hat{\theta}_{m,k}^{\textnormal{OBS}} = \max\left(\hat{\theta}_{m,k}^{\textnormal{TS}}, \frac{\alpha_{m,k}}{\alpha_{m,k} \!+\! \beta_{m,k}}   \right)$ and $\eta_{m,k} = \mathds{1}(\alpha_{m,k}/(\alpha_{m,k} + \beta_{m,k}) > \theta_c)$
    \State Choose $m_{k}^{\textnormal{SBS}} = \textnormal{arg}\max_{m} \hat{\theta}_{m,k}^{\textnormal{SBS}}$ and observe $\gamma_k$
    \State Update posteriors  to $\textnormal{Beta}(\alpha_{m,k+1}, \beta_{m,k+1})$ via 
    \begin{align*}
        \alpha_{m,k+1} & = \alpha_{m,k} + \gamma_k, & \textnormal{ if } m = m_{k}^{\textnormal{SBS}} \\
        \beta_{m,k+1} &= \beta_{m,k} + 1 - \gamma_k, & \textnormal{ if } m = m_{k}^{\textnormal{SBS}} \\
        \alpha_{m,k+1} & = \alpha_{m,k},  \,
        \beta_{m,k+1} = \beta_{m,k}, & \textnormal{ if } m \neq m_{k}^{\textnormal{SBS}}
    \end{align*}
\EndFor
\end{algorithmic}
\end{algorithm}

\section{Stability of Remote State Estimation with Channel Selection}
\label{sec:stability}
In this section, we show that the $\varepsilon$-greedy algorithm will ensure stability of the remote estimator, provided the exploration rate $\varepsilon$ is not too high, while all of the sampling based channel selection schemes presented in the previous section (Thompson sampling, OBS, and SBS) will ensure stability.  It is important to emphasize that Theorems  \ref{thm:stability_epsilon_greedy} and \ref{thm:stability_TS}, stated below, cover the   challenging case where channel qualities are unknown, and where channel selection and learning are done in real-time. 

Let $m_k$ be the channel chosen at time $k$, and define $$m^* \triangleq \textnormal{argmax}_m \theta_m.$$

\begin{theorem}
\label{thm:stability_epsilon_greedy}
Suppose condition \eqref{best_channel_condition} holds. Then under the $\varepsilon$-greedy algorithm, the expected estimation error covariance $\mathbb{E}[P_k]$ is bounded for all $k$ if and only if \begin{equation} 
\label{epsilon_stability_condition}
0 < \varepsilon < \frac{\theta^* - \theta_c}{\theta^* - \frac{1}{M}\sum_{m=1}^M \theta_m}.
\end{equation}
\end{theorem}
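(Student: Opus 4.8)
The plan is to collapse the matrix recursion \eqref{remote_estimator_eqns} to a one‑dimensional comparison. Let $\mathcal{F}_{k-1}$ denote the $\sigma$-field generated by the first $k-1$ channel choices and reception outcomes, and write $\hat{m}_k\triangleq\arg\max_m\hat{\theta}_{m,k}^{\varepsilon}$ for the greedy choice, so that $m_k=\hat{m}_k$ on the exploit event while $m_k$ is uniform on the explore event. Since $\hat{m}_k$ is $\mathcal{F}_{k-1}$-measurable and the exploration coin is independent of the past, $\mathbb{E}[\theta_{m_k}\mid\mathcal{F}_{k-1}]=\lambda_k$ with $\lambda_k\triangleq\varepsilon\bar{\theta}+(1-\varepsilon)\theta_{\hat{m}_k}$ and $\bar{\theta}\triangleq\frac1M\sum_m\theta_m$. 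As $P_k$ and $h(P_k)$ are $\mathcal{F}_{k-1}$-measurable, \eqref{remote_estimator_eqns} gives
\[
\mathbb{E}[P_{k+1}\mid\mathcal{F}_{k-1}]=\lambda_k\overline{P}+(1-\lambda_k)h(P_k).
\]
In particular $\lambda_k\le\lambda^*$ for every $k$, where $\lambda^*\triangleq\varepsilon\bar{\theta}+(1-\varepsilon)\theta^*$, and a one‑line rearrangement shows that \eqref{epsilon_stability_condition} is exactly the inequality $\lambda^*>\theta_c$; so the task splits into ``$\lambda^*\le\theta_c\Rightarrow\mathbb{E}[P_k]$ unbounded'' and ``$\lambda^*>\theta_c\Rightarrow\mathbb{E}[P_k]$ bounded''.

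For the ``only if'' part I would proceed as follows. When \eqref{epsilon_stability_condition} fails we have $\lambda_k\le\lambda^*\le\theta_c$ for all $k$, and since $\mu\mapsto\mu\overline{P}+(1-\mu)h(P_k)=h(P_k)+\mu\big(\overline{P}-h(P_k)\big)$ is matrix‑decreasing in $\mu$ on account of $h(P_k)\ge\overline{P}$ (Lemma~\ref{lemma:h_properties} and $P_k\ge\overline{P}$), we get $\mathbb{E}[P_{k+1}\mid\mathcal{F}_{k-1}]\ge\theta_c\overline{P}+(1-\theta_c)h(P_k)$; taking expectations and using linearity of $h$, $\mathbb{E}[P_{k+1}]\ge g(\mathbb{E}[P_k])$ with $g(X)\triangleq(1-\theta_c)AXA^\top+\big[(1-\theta_c)Q+\theta_c\overline{P}\big]$. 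Iterating the monotone affine map $g$ from $\mathbb{E}[P_0]=\overline{P}$ gives $\mathbb{E}[P_k]\ge g^{(k)}(\overline{P})\ge\sum_{j=0}^{k-1}(1-\theta_c)^jA^j\big[(1-\theta_c)Q+\theta_c\overline{P}\big](A^\top)^j$, and since $(1-\theta_c)\rho(A)^2=1$ by \eqref{eqn:crit} we have $(1-\theta_c)^j\|A^j\|_F^2\ge\big((1-\theta_c)\rho(A)^2\big)^j=1$. Hence $\mathrm{tr}\,\mathbb{E}[P_k]\ge k\,\lambda_{\min}\big((1-\theta_c)Q+\theta_c\overline{P}\big)$, which is strictly positive (recall $\theta_c>0$ and $\overline{P}\succ0$) and therefore diverges.

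For the ``if'' part, i.e.\ $\lambda^*>\theta_c$, the guiding idea is that $\varepsilon$-greedy locks onto $m^*$ quickly. Exploration alone pulls arm $m$ a $\mathrm{Bin}(k,\varepsilon/M)$ number of times, so a Chernoff bound gives at least $\varepsilon k/(2M)$ samples per arm with probability $1-e^{-\Omega(k)}$, after which Hoeffding's inequality plus a union bound over $m\ne m^*$ (with deviations of size $\tfrac12(\theta^*-\theta_m)$) yields $\mathbb{P}(\hat{m}_k\ne m^*)\le C_1e^{-c_2k}$. On $\{\hat{m}_k=m^*\}$ the conditional recursion becomes the \emph{stable} one $\mathbb{E}[P_{k+1}\mid\mathcal{F}_{k-1}]=\lambda^*\overline{P}+(1-\lambda^*)h(P_k)$; writing $\lambda_k=\lambda^*-(1-\varepsilon)(\theta^*-\theta_{\hat{m}_k})$ and taking expectations gives the exact identity
\begin{align*}
\mathbb{E}[P_{k+1}]={}&\lambda^*\overline{P}+(1-\lambda^*)h(\mathbb{E}[P_k])\\
&{}+(1-\varepsilon)\,\mathbb{E}\big[(\theta^*-\theta_{\hat{m}_k})\big(h(P_k)-\overline{P}\big)\big].
\end{align*}
Since $\lambda^*>\theta_c$, after a similarity transform that makes $\|A\|$ arbitrarily close to $\rho(A)$ the leading map $V\mapsto\lambda^*\overline{P}+(1-\lambda^*)h(V)$ is a strict contraction on traces, so it would suffice to bound the perturbation term on the right uniformly in $k$; then $\mathrm{tr}\,\mathbb{E}[P_k]$ stays bounded.

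That perturbation term is the step I expect to be the main obstacle. The difficulty is that $P_k$ can be as large as $h^{(k)}(\overline{P})$, whose trace grows like $\rho(A)^{2k}$, and that $\{\hat{m}_k\ne m^*\}$ is positively correlated with a long preceding run of packet losses — such a run simultaneously inflates $P_k$ and can overturn the empirical ranking of the arms — so $\mathbb{E}\big[(\theta^*-\theta_{\hat{m}_k})\,\mathrm{tr}\,h(P_k)\big]$ cannot be controlled by the crude product of a tail bound on $P_k$ and the decay rate $c_2$ of $\mathbb{P}(\hat{m}_k\ne m^*)$, since $c_2\propto\varepsilon/M$ may be smaller than $2\log\rho(A)$. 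What I expect one needs is the sharper observation that a loss run long enough to unseat $m^*$ in the ranking forces almost all of the $\Omega(\varepsilon k/M)$ exploration pulls of $m^*$ to fail — each succeeding with probability $\theta^*>\theta_c$ — an event far less likely than the naive bound $(1-\min_m\theta_m)^{\text{(run length)}}$; combining this, carried out arm‑by‑arm with each suboptimal arm's selection probability weighed against its own gap $\theta^*-\theta_m$, with the deterministic bound $P_k\le h^{(k)}(\overline{P})$ is what would close the estimate.
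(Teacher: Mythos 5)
Your ``only if'' half is complete and is essentially the paper's argument made pointwise: you verify $\lambda_k\le\lambda^*\le\theta_c$ conditionally on the history, push this through \eqref{remote_estimator_eqns} to get $\mathbb{E}[P_{k+1}]\ge\theta_c\overline{P}+(1-\theta_c)h(\mathbb{E}[P_k])$, and show the iterates of this affine map have linearly growing trace because $(1-\theta_c)\rho(A)^2=1$; the paper performs the same comparison against the divergent sequence $\tilde V_k$ without computing the growth rate explicitly. The problem is the ``if'' half, which you do not finish: your argument ends by isolating the perturbation term $(1-\varepsilon)\,\mathbb{E}\big[(\theta^*-\theta_{\hat m_k})(h(P_k)-\overline{P})\big]$ and declaring that bounding it is ``the main obstacle,'' followed by a description of what you expect would be needed. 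That is a genuine gap --- boundedness under \eqref{epsilon_stability_condition} is exactly the content of the forward direction, and no bound on that term is actually established.

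For comparison, the paper's ``if'' direction does not confront this term at all. It argues that exploration makes every channel be used infinitely often, so the sample means converge and $\mathbb{P}(m_k=m^*\mid\textnormal{exploit})\to1$; it then fixes $\delta$ so that $\theta'=(1-\varepsilon)\theta^*(1-\delta)+\frac{\varepsilon}{M}\sum_m\theta_m>\theta_c$, asserts $\mathbb{E}[P_{k+1}]\le\theta'\overline{P}+(1-\theta')h(\mathbb{E}[P_k])$ for $k>K(\delta)$ ``by a similar derivation to \eqref{eqn:EP_recursion},'' and compares with the convergent deterministic sequence $\{V_k\}$. In other words, the step you decline to take for free --- replacing the history-dependent $\lambda_k$ by the marginal lower bound $\theta'$ inside the product with $h(P_k)$ --- is precisely the step the paper treats as immediate; \eqref{eqn:EP_recursion} is derived for a deterministic reception probability, whereas under $\varepsilon$-greedy $\theta_{m_k}$ and $P_k$ are correlated through the shared loss history, and, as you note, the crude product of the tail bound on $\mathbb{P}(\hat m_k\ne m^*)$ with the worst-case growth $h^{(k)}(\overline{P})$ need not close when $\varepsilon/M$ is small. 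So you have correctly located the delicate point rather than missed the paper's idea, but your write-up does not supply the missing estimate either: to obtain a complete proof you would need either to justify the conditional form of the paper's inequality or to carry out the refined run-length argument you only sketch in your final paragraph.
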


\begin{proof}
Because of the exploration phase, where with probability $\varepsilon > 0$ a channel is chosen randomly, it is easy to see that each channel will be accessed infinitely often. Hence the estimates of the packet reception probabilities for each of the channels will converge to their true values as $k \rightarrow \infty$. Thus, during the exploitation phase we have 
$$\mathbb{P}(m_k = m^* | \textnormal{exploit}) \rightarrow 1 \textnormal{ as } k \rightarrow \infty,$$ 
while during the exploration phase we have
$$\mathbb{P}(m_k = m^* | \textnormal{explore}) = \frac{1}{M}.$$
Then, given any $\delta > 0$, there exists a $K(\delta) < \infty$ such that
$$\mathbb{P}(\textnormal{reception at time } k) \geq (1-\varepsilon)\theta^*(1-\delta) + \frac{\varepsilon}{M} \sum_{m=1}^M \theta_m$$ 
for all $k > K(\delta)$.
Now let $$\theta' \triangleq (1-\varepsilon)\theta^*(1-\delta) + \frac{\varepsilon}{M} \sum_{m=1}^M \theta_m.$$ For $k > K(\delta)$, we have by Lemma \ref{lemma:h_properties} and a similar derivation to \eqref{eqn:EP_recursion} 
$$ \mathbb{E}[P_{k+1}] \leq \theta' \overline{P} + (1-\theta') h(\mathbb{E}[P_k]).$$
By induction, it follows that $\{\mathbb{E}[P_k]\}$ will be upper bounded by a sequence $\{V_k\}$ defined by
\begin{align*}
V_k & = \mathbb{E}[P_k], \quad k \leq K(\delta) \\
V_{k+1} & = \theta'  \overline{P} + (1-\theta') h(V_k), \quad k \geq K(\delta).
\end{align*}
The sequence $\{V_k\}$ converges if and only if 
$\theta' > \theta_c$.
If 
\begin{equation}
\label{epsilon_stability_condition2}
(1-\varepsilon) \theta^*  + \frac{\varepsilon}{M} \sum_{m=1}^M \theta_m > \theta_c,
\end{equation} then one can always find a sufficiently small $\delta > 0$ to satisfy $\theta' > \theta_c$. As \eqref{epsilon_stability_condition2} is equivalent to the condition \eqref{epsilon_stability_condition}, this proves the ``if'' direction of the theorem. 

For the ``only if'' direction, first define
\begin{equation}
\label{eqn:opt_reception_prob_epsilon_greedy}
\tilde{\theta} \triangleq \theta^*(1-\varepsilon) + \frac{\varepsilon}{M} \sum_{m=1}^M \theta_m,
\end{equation}
and note that, for a given $\varepsilon$, 
$$
\mathbb{P}(\textnormal{reception at time } k) \leq \tilde{\theta}$$ 
holds, because $\tilde{\theta}$ is the reception probability assuming the optimal $m^*$ is always chosen during the exploitation phase. 
If $ \varepsilon \geq (\theta^* - \theta_c)/(\theta^* - \frac{1}{M}\sum_{m=1}^M \theta_m)$, then $
\mathbb{P}(\textnormal{reception at time } k) \leq \tilde{\theta} \leq \theta_c $, and 
thus
$$ \mathbb{E}[P_{k+1}] \geq \theta_c \overline{P} + (1-\theta_c) h(\mathbb{E}[P_k]).$$
Define a sequence $\{\tilde{V}_k\}$ by
$$ \tilde{V}_{k+1} =  \theta_c \overline{P} + (1-\theta_c) h(\tilde{V}_k), $$
which from the definition of $\theta_c$ is a divergent sequence.
An induction argument shows that $\mathbb{E}[P_k] \geq \tilde{V}_{k}, \forall k$. This implies that $\mathbb{E}[P_k]$ also diverges and completes the proof. 
\end{proof}

\begin{remark}
If $(\theta^* - \theta_c)/(\theta^* - \frac{1}{M}\sum_{m=1}^M \theta_m) > 1$, then Theorem~\ref{thm:stability_epsilon_greedy} establishes that the $\varepsilon$-greedy algorithm provides estimation stability for all $\varepsilon > 0$. 
\end{remark}

We next consider stability of the sampling based algorithms. 
We first give a preliminary result.
\begin{lemma}
\label{lemma:infinite_exploration}
Under i) Thompson sampling, ii) OBS, and iii) SBS,  all channels will be used infinitely often, and
\begin{equation}
    \label{eqn:persist}
\mathbb{P}(m_k = m^*) \rightarrow 1 \textnormal{ as } k \rightarrow \infty.
\end{equation}
\end{lemma}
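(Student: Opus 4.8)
\textbf{Proof plan for Lemma \ref{lemma:infinite_exploration}.}

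The plan is to establish the two claims together by a self-reinforcing argument: first show that every channel is sampled infinitely often, and then use that fact to show the best channel is eventually selected with probability tending to one. For the first part, I would argue by contradiction. Suppose some channel $m$ is pulled only finitely often; then after some finite (random) time its parameters $(\alpha_{m,k},\beta_{m,k})$ freeze at fixed values, so its posterior $\textnormal{Beta}(\alpha_{m,k},\beta_{m,k})$ is a fixed nondegenerate distribution on $[0,1]$. Independently at each step, the sample $\hat{\theta}_{m,k}^{\textnormal{TS}}$ then has a fixed positive probability of exceeding, say, $1-\epsilon$ for any $\epsilon>0$. Meanwhile, every channel $j$ that \emph{is} pulled infinitely often has $\alpha_{j,k}/(\alpha_{j,k}+\beta_{j,k}) \to \theta_j < 1$ and its posterior concentrates, so $\hat\theta_{j,k}^{\textnormal{TS}}$ (and hence $\hat\theta_{j,k}^{\textnormal{OBS}}$ and $\hat\theta_{j,k}^{\textnormal{SBS}}$, which differ from the TS sample only through the frozen or converging mean) stays below $1-\epsilon$ for large $k$ with probability tending to one. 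By a Borel--Cantelli / second-Borel--Cantelli type argument (the events ``channel $m$'s index exceeds all others at step $k$'' have summable-complement structure and enough independence), channel $m$ would then be selected infinitely often, a contradiction. One has to handle the OBS and SBS index definitions, but in all three cases the index of channel $m$ is at least its raw TS sample, which is all that is needed to force re-selection; and for the competing channels the index is at most $\max(\hat\theta_{j,k}^{\textnormal{TS}}, 1)$ truncated appropriately — the key point being that their \emph{means} are bounded away from $1$.

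Given that all channels are pulled infinitely often, the posterior means satisfy $\alpha_{m,k}/(\alpha_{m,k}+\beta_{m,k}) \to \theta_m$ almost surely for every $m$ (strong law for the Bernoulli samples on each arm), and the posterior variances go to zero, so the sampled values $\hat\theta_{m,k}^{\textnormal{TS}}$ converge in probability to $\theta_m$; the same then holds for $\hat\theta_{m,k}^{\textnormal{OBS}}$ and $\hat\theta_{m,k}^{\textnormal{SBS}}$ since these are $\max$'s or convex combinations of the TS sample and the converging mean, all tending to the same limit $\theta_m$. Since the $\theta_m$ are distinct by assumption, let $\Delta \triangleq \min_{m \neq m^*}(\theta^* - \theta_m) > 0$. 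For the non-optimal channels to beat channel $m^*$ in the argmax, one needs some index to move by at least $\Delta/2$ away from its limit, an event of vanishing probability; a union bound over the $M-1$ suboptimal channels then gives $\mathbb{P}(m_k = m^*) \to 1$, which is \eqref{eqn:persist}.

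The main obstacle is the first part — proving infinite exploration — because the pulls are \emph{not} independent across time (the index distributions depend on the whole history through the $(\alpha,\beta)$ counts), so one cannot directly invoke the second Borel--Cantelli lemma. The standard remedy, which I would follow, is to condition on the filtration $\mathcal{F}_k$ and use a conditional Borel--Cantelli argument (Lévy's extension): show that on the event that channel $m$ is pulled only finitely often, $\sum_k \mathbb{P}(m_k = m \mid \mathcal{F}_k)$ is bounded below by a divergent deterministic series (because the frozen posterior gives a fixed lower bound $c>0$ on the conditional probability that $\hat\theta_{m,k}^{\textnormal{TS}}$ exceeds the — eventually small — indices of all other channels), contradicting the conditional Borel--Cantelli conclusion that finitely many occurrences forces $\sum_k \mathbb{P}(m_k=m\mid\mathcal F_k) < \infty$. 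This kind of argument is by now fairly standard in the Thompson-sampling literature, and I would cite \cite{Russo_tutorial} (or the original Thompson-sampling regret analyses) for the template and adapt the index comparison to the OBS/SBS variants.
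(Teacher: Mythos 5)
Your plan is correct in outline, but note that it does considerably more work than the paper itself. The paper's proof of Lemma~\ref{lemma:infinite_exploration} is essentially a citation: for Thompson sampling and OBS the result is taken directly from \cite{MayKorda} (with \cite{Granmo} for the two-armed case), and for SBS the paper only records the sandwich $\hat{\theta}_{m,k}^{\textnormal{TS}} \leq \hat{\theta}_{m,k}^{\textnormal{SBS}} \leq \hat{\theta}_{m,k}^{\textnormal{OBS}}$ implied by \eqref{eqn:OBS_mod} and asserts that the arguments of \cite{MayKorda} carry over, omitting the details. What you have written is, in effect, a reconstruction of the argument inside that reference: contradiction via the frozen, non-degenerate posterior of any arm pulled only finitely often; the conditional (L\'evy) extension of Borel--Cantelli to handle the lack of independence across rounds, which you correctly identify as the main obstacle; and then posterior concentration plus distinctness of the $\theta_m$ to conclude \eqref{eqn:persist}. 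That is the right template. The trade-off is that the paper's route is two sentences long and leans entirely on the literature, whereas yours is self-contained but would take several pages to make fully rigorous, and would in particular have to verify that the OBS/SBS index modifications do not break the argument of \cite{MayKorda} --- which is precisely the step the paper waves through via the sandwich inequality.

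Two points to tighten if you write this out. First, in the contradiction step you compare the frozen arm $m$ only against arms pulled infinitely often; on the bad event several arms may have frozen posteriors simultaneously, and you need the conditional probability that arm $m$'s index beats those as well. This works (the frozen posteriors are non-degenerate and the samples are conditionally independent given the history, so the relevant conditional probability is bounded below by a positive constant), but it should be stated. Second, your bound on the competing indices silently uses $\theta_j < 1$ for the arms that concentrate; this is harmless and is equally implicit in the paper's citation-based proof, but it is an assumption the paper never writes down.
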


\begin{proof}
For Thompson sampling and OBS, this is proved in \cite{MayKorda} (see also \cite{Granmo} for the two armed bandit case under Thompson sampling). 
For SBS we note that, in view of \eqref{eqn:OBS_mod}, it holds that $\hat{\theta}_{m,k}^{\textnormal{TS}} \leq \hat{\theta}_{m,k}^{\textnormal{SBS}} \leq \hat{\theta}_{m,k}^{\textnormal{OBS}}$. Thus, intuitively \eqref{eqn:persist} should also hold, as SBS in a sense lies in between Thompson sampling and OBS (and convergence holds for both of these schemes). A rigorous proof of Lemma~\ref{lemma:infinite_exploration} for SBS can be given using similar arguments as in \cite{MayKorda}. The details are omitted for brevity.
\end{proof}

\begin{theorem}
\label{thm:stability_TS}
Suppose condition \eqref{best_channel_condition} holds. Then under i) Thompson sampling, ii) OBS, and iii) SBS, the expected error covariance $\mathbb{E}[P_k]$ is bounded for all $k$.
\end{theorem}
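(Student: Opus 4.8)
The plan is to mirror the proof of Theorem~\ref{thm:stability_epsilon_greedy}: I would show that $\mathbb{E}[P_k]$ is eventually dominated, in the positive semidefinite order, by a deterministic sequence $\{V_k\}$ generated by a convergent affine iteration, and then invoke convergence of that iteration. The engine that makes this work for all three sampling schemes at once is Lemma~\ref{lemma:infinite_exploration}, which already asserts $\mathbb{P}(m_k=m^*)\to 1$ under Thompson sampling, OBS and SBS alike; consequently the three cases need no separate treatment.

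Concretely, first note that the per-step reception probability satisfies $\mathbb{P}(\gamma_k=1)=\sum_{m}\mathbb{P}(m_k=m)\,\theta_m\ge \mathbb{P}(m_k=m^*)\,\theta^*$, so Lemma~\ref{lemma:infinite_exploration} gives $\mathbb{P}(\gamma_k=1)\to\theta^*$; hence for every $\delta>0$ there is a finite $K(\delta)$ with $\mathbb{P}(\gamma_k=1)\ge(1-\delta)\theta^*=:\theta'$ for all $k>K(\delta)$. Since $\theta^*>\theta_c$ by \eqref{best_channel_condition} and $\delta\mapsto(1-\delta)\theta^*$ is continuous with value $\theta^*$ at $\delta=0$, I would fix $\delta$ small enough that $\theta'>\theta_c$, equivalently $(1-\theta')\rho(A)^2<1$. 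Next, following the derivation behind \eqref{eqn:EP_recursion} and using $h(P_k)\ge\overline{P}$ from Lemma~\ref{lemma:h_properties}(ii) (so that raising the reception probability can only lower the bound in the PSD order), I would obtain, for $k>K(\delta)$, the recursion $\mathbb{E}[P_{k+1}]\le\theta'\overline{P}+(1-\theta')\,h(\mathbb{E}[P_k])$. Finally, defining $V_k=\mathbb{E}[P_k]$ for $k\le K(\delta)$ and $V_{k+1}=\theta'\overline{P}+(1-\theta')h(V_k)$ for $k\ge K(\delta)$, Lemma~\ref{lemma:h_properties}(i) and induction give $\mathbb{E}[P_k]\le V_k$ for all $k$; and because $V_{k+1}=(1-\theta')AV_kA^\top+\big((1-\theta')Q+\theta'\overline{P}\big)$ is an affine iteration whose linear part has spectral radius $(1-\theta')\rho(A)^2<1$, the sequence $\{V_k\}$ converges, so $\{\mathbb{E}[P_k]\}$ is bounded. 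The statement for all three algorithms then follows identically.

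The step I expect to be the real obstacle is the passage from the \emph{unconditional} bound $\mathbb{P}(\gamma_k=1)\ge\theta'$ to the matrix recursion $\mathbb{E}[P_{k+1}]\le\theta'\overline{P}+(1-\theta')h(\mathbb{E}[P_k])$, because the channel chosen at time $k$ --- hence the instantaneous reception chance --- is statistically coupled to $P_k$ itself: a long run of failures simultaneously inflates $P_k$ and perturbs the posteriors. To close this cleanly I would work with the conditional reception probability $q_k:=\mathbb{P}(\gamma_k=1\mid\mathcal{F}_{k-1})$, $\mathcal{F}_{k-1}$ being the natural filtration up to time $k-1$ (with respect to which $P_k$ is measurable), for which $\mathbb{E}[P_{k+1}\mid\mathcal{F}_{k-1}]=q_k\overline{P}+(1-q_k)h(P_k)$ holds exactly, and then control the event $\{q_k<\theta'\}$ using the self-correcting mechanism already underlying Lemma~\ref{lemma:infinite_exploration} (a suboptimal channel that keeps failing has its posterior mean driven below that of $m^*$ and is therefore no longer selected). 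Alternatively one can bound the residual $\mathbb{E}[\mathds{1}(m_k\neq m^*)\,h(P_k)]$ directly via a large-deviations estimate on the length of the current failure run, the point being that the geometric decay of that run-length tail beats the exponential growth of the iterates $h^{(j)}(\overline{P})$ precisely because $(1-\theta^*)\rho(A)^2<1$. Either way, the decisive quantitative input is $\theta^*>\theta_c$, exactly as in Theorem~\ref{thm:stability_epsilon_greedy}.
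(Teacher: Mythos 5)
Your proposal is correct and follows essentially the same route as the paper's proof: invoke Lemma~\ref{lemma:infinite_exploration} to obtain an eventual per-step reception probability $\theta' = (1-\delta)\theta^* > \theta_c$, pass to the recursion $\mathbb{E}[P_{k+1}] \leq \theta' \overline{P} + (1-\theta') h(\mathbb{E}[P_k])$, and dominate $\{\mathbb{E}[P_k]\}$ by the convergent affine iteration $\{V_k\}$. The coupling between the channel choice and $P_k$ that you flag in your final paragraph is a genuine subtlety, but the paper's own proof performs exactly the same unconditional-to-recursive step via ``a similar derivation to \eqref{eqn:EP_recursion}'' without further comment, so your treatment is, if anything, more careful than the published argument.
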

\begin{proof}
From Lemma \ref{lemma:infinite_exploration}, we know that under all three sampling based schemes, given any $\delta > 0$, there exists a $K(\delta) < \infty$ such that
$$\mathbb{P}(\textnormal{reception at time } k) \geq \theta^*(1-\delta), \forall k > K(\delta).$$
Pick a sufficiently small $\delta$ such that 
$\theta' \triangleq \theta^*(1-\delta) > \theta_c$ still holds. 
Then for $k > K(\delta)$, we have by Lemma \ref{lemma:h_properties} and a similar derivation to \eqref{eqn:EP_recursion} that
$$ \mathbb{E}[P_{k+1}] \leq \theta' \overline{P} + (1-\theta') h(\mathbb{E}[P_k]).$$
Define a sequence $\{V_k\}$ by 
\begin{align*}
V_{k} & = \mathbb{E}[P_k], \quad k \leq K(\delta)\\
V_{k+1} & = \theta'  \overline{P} + (1-\theta') h(V_k), \quad k \geq K(\delta).
\end{align*}
Using an induction argument, we can show that $V_k$ upper bounds $\mathbb{E}[P_k]$ for all $k $. Since $\theta' > \theta_c$, the sequence $\{V_k\}$ converges, and in particular  $V_k$ is bounded for all $k$. Hence $\mathbb{E}[P_k]$ is also bounded for all $k$. 
\end{proof}

\begin{remark}
Suppose the system matrix $A$, and hence $\theta_c$, is not known exactly, but we believe  $\theta_c$ to be equal to $\hat{\theta}_c$. Then in Line~4 of Algorithm \ref{alg:SBS} we would replace $\theta_c$ by $\hat{\theta}_c$. However, even with an incorrect value $\hat{\theta}_c$, we still have $\hat{\theta}_{m,k}^{\textnormal{TS}} \leq \hat{\theta}_{m,k}^{\textnormal{SBS}} \leq \hat{\theta}_{m,k}^{\textnormal{OBS}}$ and Lemma~\ref{lemma:infinite_exploration} would still hold. Hence SBS guarantees stability even for an incorrect $\hat{\theta}_c$, provided condition \eqref{best_channel_condition} holds for the true $\theta_c$. 
\end{remark}

\section{Performance Bounds}
\label{sec:regret_bounds}
In the multi-armed bandit literature, it is common to analyze the performance of an algorithm via the notion of \emph{regret} \cite{Slivkins_book,Lattimore_book}, which is defined as the difference between the expected cumulative reward from always playing the optimal arm and the expected cumulative reward using a particular bandit algorithm.  The $\varepsilon$-greedy algorithm when applied to the standard multi-armed bandit problem is well known to achieve linear regret. More sophisticated algorithms which can achieve logarithmic regret (which, in a sense, is best possible \cite{LaiRobbins}) include UCB \cite{AuerCesaBianchi}, $\varepsilon$-greedy with a carefully chosen decaying exploration probability \cite{AuerCesaBianchi}, and Thompson sampling \cite{AgrawalGoyal_ACM,KaufmannKordaMunos}.

For the remote estimation problem at hand, it is convenient to regard the trace of the estimation error covariance $\textnormal{tr} P_k$ as a \emph{cost}, or alternatively regard $-\textnormal{tr} P_k$ as a \emph{reward}, see also~\cite{LeongRamaswamy}. Motivated by the notion of regret in standard multi-armed bandit problems, in this paper we define the \emph{estimation regret} over a horizon $T$ for the remote state estimation problem as 
\begin{equation}
\label{eqn:regret_defn}
\begin{split}
\textnormal{regret}^E(T) & \triangleq \sum_{k=1}^T \big( -\textnormal{tr} \mathbb{E}[P_k^*] - (-\textnormal{tr}\mathbb{E}[P_k]) \big) \\
& = \sum_{k=1}^T \textnormal{tr} (\mathbb{E}[P_k]-\mathbb{E}[P_k^*]). 
\end{split}
\end{equation}
In \eqref{eqn:regret_defn}, $\mathbb{E}[P_k^*]$ is the expected error covariance assuming that the optimal channel $m^*$ is always chosen, and  $\mathbb{E}[P_k]$ is the expected error covariance when using a particular channel selection algorithm. Thus, \eqref{eqn:regret_defn} constitutes a measure of the degree of suboptimality incurred from using a particular algorithm. Stationary and transient performance of an algorithm can be quantified by inspecting how fast the regret grows in relation to the horizon length $T$.

To state our results, we first recall some order notation.
\begin{definition}
Given functions $f(.)$ and $g(.)$, we say that $f(T) = O(g(T))$ if there exist constants $c$ and $T_0$, such that $|f(T)| \leq c g(T), \forall T \geq T_0$. 
We say that $f(T) = \Omega(g(T))$ if  $g(T) = O(f(T))$, and that $f(T) = \Theta(g(T))$ if both $f(T) = O(g(T))$ and $f(T) = \Omega(g(T))$. We say that $f(T) = o(g(T))$ if $\lim_{T\rightarrow \infty} f(T)/g(T) = 0$.
\end{definition}

For $\varepsilon$-greedy, the estimation regret scales linearly with $T$. 
\begin{theorem}
\label{thm:regret_bound_epsilon_greedy}
Suppose conditions  \eqref{best_channel_condition} and \eqref{epsilon_stability_condition} hold. Then under the $\varepsilon$-greedy algorithm, we have 
$$\textnormal{regret}^E(T) = \Theta (T).$$
\end{theorem}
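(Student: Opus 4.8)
The plan is to establish matching upper and lower bounds, i.e. $\textnormal{regret}^E(T) = O(T)$ and $\textnormal{regret}^E(T) = \Omega(T)$. The $O(T)$ part is immediate from stability: under \eqref{best_channel_condition} and \eqref{epsilon_stability_condition}, Theorem~\ref{thm:stability_epsilon_greedy} supplies a finite $B$ with $\textnormal{tr}\,\mathbb{E}[P_k]\le B$ for all $k$, while Lemma~\ref{lemma:h_properties} gives $P_k^*\ge\overline P$ and hence $\textnormal{tr}\,\mathbb{E}[P_k^*]\ge\textnormal{tr}\,\overline P$. Every summand in \eqref{eqn:regret_defn} therefore lies in $[0,\,B-\textnormal{tr}\,\overline P]$ (nonnegativity will come from the comparison established below), so $\textnormal{regret}^E(T)=O(T)$.

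The substantive part is the $\Omega(T)$ bound. First I would derive a lower-bounding recursion for $\mathbb{E}[P_k]$. Let $\mathcal G_{k-1}$ be the history through time $k-1$, so that $h(P_{k-1})$ is $\mathcal G_{k-1}$-measurable while the chosen channel $m_k$ is a function of $\mathcal G_{k-1}$ together with the fresh exploration coin at time $k$. From \eqref{remote_estimator_eqns},
\[
\mathbb{E}[P_k\mid\mathcal G_{k-1}]=\bar\theta_k\,\overline P+(1-\bar\theta_k)\,h(P_{k-1}),\qquad \bar\theta_k\triangleq\mathbb{E}[\theta_{m_k}\mid\mathcal G_{k-1}].
\]
Since with probability $\varepsilon$ the channel is uniform and otherwise its reception probability is at most $\theta^*$, one has $\bar\theta_k\le(1-\varepsilon)\theta^*+\tfrac{\varepsilon}{M}\sum_m\theta_m=\tilde\theta$ as in \eqref{eqn:opt_reception_prob_epsilon_greedy}; using $h(P_{k-1})\ge\overline P$ this gives $\mathbb{E}[P_k\mid\mathcal G_{k-1}]\ge\tilde\theta\,\overline P+(1-\tilde\theta)h(P_{k-1})$, and taking expectations with linearity of $h$,
\[
\mathbb{E}[P_k]\ \ge\ \tilde\theta\,\overline P+(1-\tilde\theta)\,h(\mathbb{E}[P_{k-1}]),\qquad \mathbb{E}[P_0]=\overline P .
\]

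Next I would compare this against $\mathbb{E}[P_k^*]=\theta^*\overline P+(1-\theta^*)h(\mathbb{E}[P_{k-1}^*])$. Both are iterations of the affine map $g_\theta(X)\triangleq\theta\overline P+(1-\theta)h(X)$, whose linear part $X\mapsto(1-\theta)AXA^\top$ has spectral radius $(1-\theta)\rho(A)^2$, so for $\theta>\theta_c$ the iteration converges globally to a unique fixed point. Since $\theta^*>\theta_c$ by \eqref{best_channel_condition} and $\tilde\theta>\theta_c$ by \eqref{epsilon_stability_condition} (equivalently \eqref{epsilon_stability_condition2}), let $\overline P^*$ and $\overline L$ be the two limits, and let $L_k\triangleq g_{\tilde\theta}^{(k)}(\overline P)$. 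A monotone-comparison induction using $g_{\tilde\theta}(X)-g_{\theta^*}(X)=(\theta^*-\tilde\theta)(h(X)-\overline P)\ge0$ on $\{X\ge\overline P\}$ and monotonicity of $h$ shows $\mathbb{E}[P_k]\ge L_k\ge\mathbb{E}[P_k^*]$ for all $k$ (giving the nonnegativity used above) and $\overline L\ge\overline P^*$. I would then upgrade this to a strictly positive trace gap: if $\textnormal{tr}\,\overline L=\textnormal{tr}\,\overline P^*$, then $\overline L-\overline P^*\ge0$ with zero trace forces $\overline L=\overline P^*$, hence $g_{\tilde\theta}(\overline P^*)=g_{\theta^*}(\overline P^*)$, i.e.\ $(\theta^*-\tilde\theta)(h(\overline P^*)-\overline P)=0$; since the $\theta_m$ are distinct we have $\tfrac1M\sum_m\theta_m<\theta^*$, hence $\tilde\theta<\theta^*$, so $h(\overline P^*)=\overline P$, contradicting $\textnormal{tr}\,h(\overline P^*)\ge\textnormal{tr}\,h(\overline P)>\textnormal{tr}\,\overline P$ from Lemma~\ref{lemma:h_properties}. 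Thus $\kappa\triangleq\tfrac12(\textnormal{tr}\,\overline L-\textnormal{tr}\,\overline P^*)>0$; as $\textnormal{tr}\,\mathbb{E}[P_k]\ge\textnormal{tr}\,L_k\to\textnormal{tr}\,\overline L$ and $\textnormal{tr}\,\mathbb{E}[P_k^*]\to\textnormal{tr}\,\overline P^*$, there is $k_0$ with $\textnormal{tr}(\mathbb{E}[P_k]-\mathbb{E}[P_k^*])\ge\kappa$ for $k\ge k_0$, whence $\textnormal{regret}^E(T)\ge\kappa(T-k_0)=\Omega(T)$. Combining the two bounds gives $\textnormal{regret}^E(T)=\Theta(T)$.

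The step I expect to be the main obstacle is the derivation of the lower-bound recursion: $\theta_{m_k}$ and $P_{k-1}$ are statistically dependent (both are functions of the past loss sequence), so one cannot naively pull $\theta_{m_k}$ out of $\mathbb{E}[(1-\theta_{m_k})h(P_{k-1})]$; conditioning on $\mathcal G_{k-1}$, under which $h(P_{k-1})$ is measurable and the bound $\bar\theta_k\le\tilde\theta$ holds deterministically, is what makes the argument rigorous. The secondary subtlety is promoting the matrix inequality $\overline L\ge\overline P^*$ to a strictly positive \emph{trace} gap, which is precisely where the assumption $\theta_i\neq\theta_j$ (equivalently $\tilde\theta<\theta^*$) is used.
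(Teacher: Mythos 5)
Your proof is correct, and its skeleton matches the paper's: the $O(T)$ bound from the uniform bound on $\mathbb{E}[P_k]$ supplied by Theorem~\ref{thm:stability_epsilon_greedy} is identical, and the $\Omega(T)$ bound rests on the same two ingredients, namely the quantity $\tilde{\theta}$ of \eqref{eqn:opt_reception_prob_epsilon_greedy} as an upper bound on the achieved reception probability, and the strict gap $\textnormal{tr}\,h(\overline{P}) > \textnormal{tr}\,\overline{P}$ from Lemma~\ref{lemma:h_properties} together with $\tilde{\theta}<\theta^*$. Where you diverge is in how the positive per-step trace gap is extracted: the paper subtracts the recursion for $\mathbb{E}[P_k^*]$ from the lower-bounding recursion evaluated at the \emph{same} point $\mathbb{E}[P_{k-1}^*]$ (using $\mathbb{E}[P_{k-1}]\geq\mathbb{E}[P_{k-1}^*]$ and monotonicity of $h$), which immediately gives $\textnormal{tr}(\mathbb{E}[P_k]-\mathbb{E}[P_k^*]) \geq (\theta^*-\tilde{\theta})\,\textnormal{tr}\big(h(\mathbb{E}[P_{k-1}^*])-\overline{P}\big) \geq (\theta^*-\tilde{\theta})\,\textnormal{tr}\big(h(\overline{P})-\overline{P}\big) \triangleq r > 0$ for every $k\geq 1$; you instead pass to the fixed points of the affine maps $g_{\tilde{\theta}}$ and $g_{\theta^*}$ and obtain a gap $\kappa>0$ only for $k\geq k_0$, which still yields $\Omega(T)$ but requires the extra fixed-point existence and trace-separation discussion. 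On the other hand, your conditioning on $\mathcal{G}_{k-1}$ to justify $\mathbb{E}[P_k] \geq \tilde{\theta}\,\overline{P} + (1-\tilde{\theta})\,h(\mathbb{E}[P_{k-1}])$ is a genuine improvement in rigor over the paper, which asserts this inequality without comment even though the chosen channel and $P_{k-1}$ are dependent through the history --- exactly the pitfall you flag, and your resolution (the bound $\bar{\theta}_k\leq\tilde{\theta}$ holds deterministically given $\mathcal{G}_{k-1}$, and $h(P_{k-1})\geq\overline{P}$) is the right one.
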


\begin{proof}
We first show that $\textnormal{regret}^E(T) = O (T)$. 
By Theorem~\ref{thm:stability_epsilon_greedy}, there exists some bounded $\mathcal{P}^\varepsilon$ such that
$$\mathbb{E}[P_k] \leq \mathcal{P}^\varepsilon, \quad \forall k.$$ 
We have
\begin{align*}
\textnormal{regret}^E(T) &  = \sum_{k=1}^T \textnormal{tr} (\mathbb{E}[P_k]-\mathbb{E}[P_k^*]) \\
& \leq \sum_{k=1}^T \textnormal{tr} \mathbb{E}[P_k] \\
& \leq  \textnormal{tr} (\mathcal{P}^\varepsilon) T  = O(T).
\end{align*}

We now show that $\textnormal{regret}^E(T) = \Omega (T)$. Recall the definition $\tilde{\theta}$ from \eqref{eqn:opt_reception_prob_epsilon_greedy}. Note that we have
\begin{align*}
\mathbb{E}[P_k] &\geq \tilde{\theta} \overline{P} + (1-\tilde{\theta}) h(\mathbb{E}[P_{k-1}]) \geq \tilde{\theta} \overline{P} + (1-\tilde{\theta}) h(\mathbb{E}[P_{k-1}^*]), \\
\mathbb{E}[P_k^*] &= \theta^* \overline{P} + (1-\theta^*)  h(\mathbb{E}[P_{k-1}^*]).
\end{align*}
Then 
\begin{align}
 \textnormal{tr} (\mathbb{E}[P_k]-\mathbb{E}[P_k^*]) 
 & \geq \textnormal{tr} \big( (\theta^* - \tilde{\theta}) ( h(\mathbb{E}[P_{k-1}^*]) - \overline{P} )\big) \nonumber \\
 & \geq (\theta^* - \tilde{\theta}) \min_{k \geq 1} \textnormal{tr}\big(  h(\mathbb{E}[P_{k-1}^*]) - \overline{P}\big) \nonumber \\
 &\triangleq r. \label{eqn:r_epsilon_greedy}
\end{align}
By the definition \eqref{eqn:opt_reception_prob_epsilon_greedy} and Lemma \ref{lemma:h_properties}, we have $r > 0$. Hence
\begin{align*}
\textnormal{regret}^E(T) &  = \sum_{k=1}^T \textnormal{tr} (\mathbb{E}[P_k]-\mathbb{E}[P_k^*]) \\
 & \geq r T  = \Omega (T).
\end{align*}

\end{proof}

For the Bernoulli bandit problem with per stage rewards in $\{0,1\}$, it is known that the regret scales logarithmically with $T$  under Thompson sampling \cite{AgrawalGoyal_ACM,KaufmannKordaMunos}. An ``age-of-information regret'' measure was recently considered in \cite{Fatale_AOI}, where the per stage cost is unbounded, and can increase by (at most) one at every time step. It is shown that this age-of-information regret also scales logarithmically with $T$. In the current work, it follows from \eqref{eqn:h_defn} that the per stage cost $\mathbb{E}[P_k]$ (or reward $-\mathbb{E}[P_k]$) may also become unbounded, and furthermore can increase at an exponential rate. Interestingly, for the estimation regret introduced in  \eqref{eqn:regret_defn}, a bound that is logarithmic in $T$ still holds. 

Let us refer to sub-optimal channels as those whose packet reception probabilities are not equal  to the optimal $\theta^*$.
We begin with a preliminary result.
\begin{lemma}
\label{lemma:suboptimal_arms}
Let $N_{\textnormal{sub}}(T) $ denote the number of uses of sub-optimal channels over horizon $T$. Then, under i) Thompson sampling, ii) OBS, and iii) SBS, we have
$$\mathbb{E}[N_{\textnormal{sub}}(T)] = \Theta(\log T).$$
\end{lemma}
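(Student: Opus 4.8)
The plan is to observe first that $N_{\textnormal{sub}}(T)$ is a purely combinatorial quantity of the underlying bandit subproblem: the feedback $\gamma_k \in \{0,1\}$ that drives the updates of $\alpha_{m,k},\beta_{m,k}$ is exactly the Bernoulli reward of the channel played at time $k$, so the channel-selection dynamics coincide with those of a standard $M$-armed Bernoulli bandit and do not involve the plant \eqref{eqn:sys} at all. Consequently both directions can be handled with known bandit-theoretic tools, applied per sub-optimal channel and then summed over the finitely many of them. Throughout I write $\Delta_{\min} \triangleq \min_{m : \theta_m \neq \theta^*}(\theta^* - \theta_m)$, which is strictly positive because the $\theta_m$ are assumed distinct.

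\textbf{Upper bound, $\mathbb{E}[N_{\textnormal{sub}}(T)] = O(\log T)$.} For Thompson sampling this is the classical result of \cite{AgrawalGoyal_ACM,KaufmannKordaMunos}, which gives $\mathbb{E}[N_m(T)] = O(\log T)$ for every sub-optimal channel $m$; summing over $m$ gives the bound. For OBS, \cite{MayKorda} proves the logarithmic regret bound $\sum_m (\theta^* - \theta_m)\,\mathbb{E}[N_m(T)] = O(\log T)$, and dividing through by $\Delta_{\min}$ yields $\mathbb{E}[N_{\textnormal{sub}}(T)] = O(\log T)$. For SBS, which is new, I would re-run the OBS analysis of \cite{MayKorda} exploiting the sandwich $\hat\theta_{m,k}^{\textnormal{TS}} \le \hat\theta_{m,k}^{\textnormal{SBS}} \le \hat\theta_{m,k}^{\textnormal{OBS}}$ from \eqref{eqn:OBS_mod}. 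For a sub-optimal channel $m$ one always has $\hat\theta_{m,k}^{\textnormal{SBS}} \le \hat\theta_{m,k}^{\textnormal{OBS}}$, so the ``sub-optimal arm overestimated'' term is controlled exactly as for OBS. For the optimal channel $m^*$, assumption \eqref{best_channel_condition} gives $\theta^* > \theta_c$, so once the empirical mean of $m^*$ has concentrated above $\theta_c$ we have $\eta_{m^*,k} = 1$ in \eqref{eqn:obs_mod} and hence $\hat\theta_{m^*,k}^{\textnormal{SBS}} = \hat\theta_{m^*,k}^{\textnormal{OBS}}$; one shows that the expected number of rounds on which $\eta_{m^*,k} = 0$ is finite, by a standard concentration/infinite-exploration argument in the spirit of Lemma \ref{lemma:infinite_exploration}. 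Plugging these two facts into the May--Korda decomposition of $\mathbb{P}(m_k = m)$ for sub-optimal $m$ gives $\mathbb{E}[N_{\textnormal{sub}}(T)] = O(\log T)$ for SBS as well.

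\textbf{Lower bound, $\mathbb{E}[N_{\textnormal{sub}}(T)] = \Omega(\log T)$.} The $O(\log T)$ bound just established shows that each of the three algorithms has bandit regret $o(T^a)$ for every $a > 0$, i.e.\ is a consistent (uniformly good) policy in the sense of \cite{LaiRobbins}; for Thompson sampling and OBS this is also classical \cite{AgrawalGoyal_ACM,MayKorda}. The Lai--Robbins lower bound then gives, for every sub-optimal channel $m$, $\liminf_{T \to \infty}\mathbb{E}[N_m(T)]/\log T \ge 1/D(\theta_m \,\|\, \theta^*) > 0$, where $D(\theta_m \,\|\, \theta^*)$ is the binary relative entropy (finite and positive since $\theta_m < \theta^*$). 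Summing over sub-optimal channels, $\mathbb{E}[N_{\textnormal{sub}}(T)] \ge (\log T)\big/\max_m D(\theta_m \,\|\, \theta^*) \cdot (1-o(1)) = \Omega(\log T)$.

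\textbf{Main obstacle.} The only non-routine step is the $O(\log T)$ upper bound for SBS, which no existing reference covers: one must verify that the stabilizing threshold $\eta_{m,k}$ cannot inflate the sub-optimal play count, and this reduces to bounding how long the optimal channel's empirical mean can remain below $\theta_c$ — precisely the point at which $\theta^* > \theta_c$ and a concentration bound enter. The remaining pieces (summing per-arm bounds, invoking Lai--Robbins, and the $O(\log T)$ bounds for Thompson sampling and OBS) are standard.
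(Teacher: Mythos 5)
Your proposal is correct and follows essentially the same route as the paper: the upper bound is obtained by citing the known $O(\log T)$ results for Thompson sampling and OBS and extending to SBS via the sandwich $\hat\theta_{m,k}^{\textnormal{TS}} \le \hat\theta_{m,k}^{\textnormal{SBS}} \le \hat\theta_{m,k}^{\textnormal{OBS}}$, and the lower bound follows from the Lai--Robbins bound once the policies are shown to be uniformly good. Your sketch of the SBS case is in fact slightly more explicit than the paper's (which omits the details), correctly identifying that the key point is that $\theta^* > \theta_c$ forces $\eta_{m^*,k}=1$ eventually so the optimal arm is treated as in OBS.
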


\begin{proof}
The property that $\mathbb{E}[N_{\textnormal{sub}}(T)] = O(\log T)$ is shown for Thompson sampling in \cite{AgrawalGoyal_ACM,KaufmannKordaMunos}. The corresponding result for OBS is proved in \cite{Mellor_thesis}, based on similar arguments as in \cite{AgrawalGoyal_ACM}. For SBS, $\mathbb{E}[N_{\textnormal{sub}}(T)] = O(\log T)$  can be shown by making use of the relation $\hat{\theta}_{m,k}^{\textnormal{TS}} \leq \hat{\theta}_{m,k}^{\textnormal{SBS}} \leq \hat{\theta}_{m,k}^{\textnormal{OBS}}$ and the arguments of \cite{Mellor_thesis}. The (rather lengthy) details are omitted for brevity. 

We next recall a result from \cite{LaiRobbins}, namely that any policy with $\mathcal{R}(T) = o(T^a)$ for every $a>0$ satisfies $\mathbb{E}[N_{\textnormal{sub}}(T)] = \Omega(\log T)$, where  $\mathcal{R}(T) \triangleq \mathbb{E} \big[\sum_{k=1}^T (\theta^* - \theta(k)) \big]$ is the classical notion of regret for the Bernoulli bandit problem, see e.g. \cite{AgrawalGoyal_ACM,KaufmannKordaMunos}. Since for Thompson sampling, OBS, and SBS, we have
$$ \mathcal{R}(T) = O\big(\mathbb{E}[N_{\textnormal{sub}}(T)]\big) = O(\log T) = o(T^a), \, \forall a > 0,$$
where the first equality follows from standard arguments \cite{AgrawalGoyal_ACM,KaufmannKordaMunos} and the second equality is what we have just shown, the property $\mathbb{E}[N_{\textnormal{sub}}(T)] = \Omega(\log T)$ therefore holds for all three schemes. 
\end{proof}

\begin{figure}[t]
	\centering\includegraphics[scale=0.87]{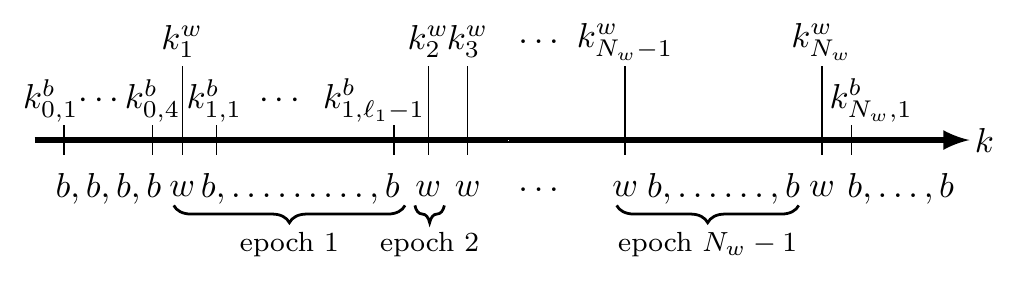}
	\caption{Channel selection based on schedule $\mathcal{A}$, where `w' and `b' refer to the worst and the best channel selection, respectively.}
	\label{fig:epochs}
\end{figure}

\begin{theorem}
\label{thm:regret_bound_general}
Suppose that condition \eqref{best_channel_condition} holds. Then under i) Thompson sampling, ii) OBS, and iii) SBS, we have
$$ \textnormal{regret}^E(T)  = \Theta (\log T). $$
\end{theorem}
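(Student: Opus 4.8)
\emph{Overall strategy.} My plan is to prove the two matching bounds $\textnormal{regret}^E(T)=\Omega(\log T)$ and $\textnormal{regret}^E(T)=O(\log T)$, using Lemma~\ref{lemma:suboptimal_arms} ($\mathbb{E}[N_{\textnormal{sub}}(T)]=\Theta(\log T)$ for all three algorithms) as the common input and an affine recursion for $D_k\triangleq\mathbb{E}[P_k]-\mathbb{E}[P_k^*]$ as the workhorse. Conditioning on the history up to and including the choice $m_k$ (so that $P_k$ is measurable and $\gamma_k\sim\textnormal{Bernoulli}(\theta_{m_k})$), and using $P_k\geq\overline{P}$ and $h(P_k)\geq\overline{P}$ (Lemma~\ref{lemma:h_properties}) together with $\theta_{m_k}\leq\theta^*$ always and $\theta_{m_k}\leq\theta^{(2)}\triangleq\max_{m\neq m^*}\theta_m$ when $m_k\neq m^*$, one first gets $\mathbb{E}[P_k]\geq\mathbb{E}[P_k^*]$ (hence $D_k\geq 0$) for all $k$ by a simple induction, and then the two-sided estimate
$$c_1\,\mathbb{P}(m_k\neq m^*)\big(h(\overline{P})-\overline{P}\big)+(1-\theta^*)AD_kA^\top\ \leq\ D_{k+1}\ \leq\ (1-\theta^*)AD_kA^\top+c_2\,\mathbb{E}\big[\mathds{1}(m_k\neq m^*)\big(h(P_k)-\overline{P}\big)\big]$$
for positive constants $c_1,c_2$ depending only on $\theta^*,\theta^{(2)}$ and the channel probabilities. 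Taking traces in the left-hand inequality, using $D_k\geq 0$ and $\textnormal{tr}(h(\overline{P})-\overline{P})>0$ (Lemma~\ref{lemma:h_properties}(iii)), and summing over $k$ gives $\textnormal{regret}^E(T)\geq c_1\textnormal{tr}(h(\overline{P})-\overline{P})\sum_k\mathbb{P}(m_k\neq m^*)=\Omega(\mathbb{E}[N_{\textnormal{sub}}(T)])=\Omega(\log T)$; this is the per-selection analogue of the $\Omega(T)$ argument in the proof of Theorem~\ref{thm:regret_bound_epsilon_greedy}.

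\emph{Upper bound, reduction.} Since $\theta^*>\theta_c$, the linear operator $\Phi:X\mapsto(1-\theta^*)AXA^\top$ on symmetric matrices has spectral radius $(1-\theta^*)\rho(A)^2<1$, so $\sum_{n\geq 0}\textnormal{tr}(\Phi^n(X))<\infty$ for $X\geq 0$. Unrolling the right-hand inequality from $D_1=0$ and summing $\textnormal{tr}(D_k)$ over $k\leq T$ then yields $\textnormal{regret}^E(T)\leq C\sum_{k=1}^T\mathbb{E}[\mathds{1}(m_k\neq m^*)\textnormal{tr}(h(P_k))]$ for a constant $C$, so it remains to show this sum is $O(\log T)$. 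I would exploit $P_k=h^{(\tau_k)}(\overline{P})$, where $\tau_k$ is the number of consecutive lost transmissions immediately preceding $k$; hence $\textnormal{tr}(h(P_k))\leq c_3\beta^{\tau_k}$ for any $\beta>\rho(A)^2$, and we fix $\beta$ with $(1-\theta^*)\beta<1$ (possible since $(1-\theta^*)\rho(A)^2<1$). Split the loss-run of length $\tau_k$ preceding a sub-optimal selection at $k$ at the most recent transmission that is \emph{not} a lost transmission on channel $m^*$: the part after it is a maximal run of lost $m^*$-transmissions, which is stochastically dominated by a $\textnormal{Geometric}(\theta^*)$ random variable, so — by a Wald-type argument over the $N_{\textnormal{sub}}(T)$ sub-optimal selections, with $(1-\theta^*)\beta<1$ making the geometric moment finite — its contribution is $O(\mathbb{E}[N_{\textnormal{sub}}(T)])=O(\log T)$; the part before it begins with an earlier sub-optimal selection, which I would charge back to it, i.e.\ group the sub-optimal selections into the epochs of Fig.~\ref{fig:epochs}.

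\emph{The main obstacle.} The hard part is precisely this charging-back: within an epoch, a run of consecutively-chosen sub-optimal channels makes $\textnormal{tr}(P_k)$ grow geometrically in the run length with base $\beta>1$, so a naive recursion does not close and the bound $\mathbb{E}[N_{\textnormal{sub}}(T)]=O(\log T)$ alone is \emph{not} enough — one must additionally rule out long runs of sub-optimal selections. The needed extra ingredient, valid for Thompson sampling, OBS and SBS because their posteriors concentrate once a sub-optimal arm has been sampled sufficiently often (the same mechanism behind the $O(\log T)$ bounds of \cite{AgrawalGoyal_ACM,KaufmannKordaMunos,Mellor_thesis}), is that the expected sum of $\beta^{(\textnormal{run length})}$ over all maximal runs of consecutive sub-optimal selections is $O(\log T)$; the SBS case is reduced to the Thompson/OBS cases through $\hat\theta_{m,k}^{\textnormal{TS}}\leq\hat\theta_{m,k}^{\textnormal{SBS}}\leq\hat\theta_{m,k}^{\textnormal{OBS}}$, exactly as in Lemma~\ref{lemma:infinite_exploration}. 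Combining this with the geometric bound from the previous step closes the estimate $\sum_k\mathbb{E}[\mathds{1}(m_k\neq m^*)\textnormal{tr}(h(P_k))]=O(\log T)$, giving $\textnormal{regret}^E(T)=O(\log T)$, and together with the lower bound, $\textnormal{regret}^E(T)=\Theta(\log T)$.
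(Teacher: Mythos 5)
Your lower bound is correct and is essentially the paper's argument: on any step with $m_k\neq m^*$ the instantaneous gap is at least a fixed $r^\circ>0$ (via $h(\overline{P})-\overline{P}$ and Lemma~\ref{lemma:h_properties}), so summing over sub-optimal selections and invoking Lemma~\ref{lemma:suboptimal_arms} gives $\Omega(\log T)$. The reduction of the upper bound to $\sum_{k}\mathbb{E}[\mathds{1}(m_k\neq m^*)\,\textnormal{tr}(h(P_k))]=O(\log T)$ via the affine recursion for $D_k$ and summability of $\Phi^n$ is also sound.

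The gap is the step you yourself flag as ``the needed extra ingredient'': the claim that the expected sum of $\beta^{(\textnormal{run length})}$ over all maximal runs of consecutive sub-optimal selections is $O(\log T)$. This is asserted, not proved, and it does not follow from the references behind Lemma~\ref{lemma:suboptimal_arms}: those establish $\mathbb{E}[N_{\textnormal{sub}}(T)]=O(\log T)$, a bound on a \emph{count}, whereas you need an \emph{exponential moment} of maximal run lengths with base $\beta>\rho(A)^2>1$. A total count of $O(\log T)$ is perfectly consistent with a single run of length $L$ occurring with probability decaying only polynomially in $L$, in which case $\mathbb{E}[\beta^L]$ can diverge; ruling this out for Thompson sampling (especially in the early rounds, where posteriors are flat and consecutive sub-optimal draws are not rare) is a genuinely new and delicate probabilistic statement, not a corollary of posterior concentration. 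The same difficulty reappears in your Wald-type step, since the cross-moment $\mathbb{E}[\mathds{1}(m_k\neq m^*)\,\textnormal{tr}(h(P_k))]$ couples the arm-selection event with the current covariance and cannot be factored.

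The paper avoids this entirely, and the contrast is instructive. It first replaces every sub-optimal selection by the \emph{worst} channel (schedule $\mathcal{A}$, which only increases the regret), then partitions time into epochs that each begin with a single worst-channel use followed only by best-channel uses. Within an epoch the gap contracts geometrically because $(1-\theta^*)\rho(A)^2<1$, so the per-epoch regret is at most $\kappa\,\textnormal{tr}\Delta^w_i$; and, crucially, $\textnormal{tr}\Delta^w_i\leq\xi$ \emph{uniformly} because the stability machinery of Theorem~\ref{thm:stability_TS} (which needs only $\mathbb{P}(m_k=m^*)\to 1$ from Lemma~\ref{lemma:infinite_exploration}) already bounds $\mathbb{E}[P^{\mathcal{A}}_k]$ by a fixed matrix for all $k$. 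The potential exponential blow-up along long loss runs is thus absorbed at the level of the unconditional expectation, and the only count that matters is the number of epochs, which equals $N_{\textnormal{sub}}(T)$ and has expectation $O(\log T)$ by Lemma~\ref{lemma:suboptimal_arms}. No pathwise control of run lengths is required. To repair your proof you would either need to prove the exponential-moment claim for all three algorithms, or restructure the upper bound along the paper's epoch decomposition.
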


Before giving the formal proof of this result, we first provide a sketch of the proof strategy for the upper bound $ \textnormal{regret}^E(T)  = O(\log T)$. Following the idea of \cite{Fatale_AOI}, which showed a logarithmic bound for the age-of-information regret, we will upper bound the estimation regret with the regret that is achieved using an alternative schedule, say $\mathcal{A}$, that replaces all uses of sub-optimal channels with the worst channel. In \cite{Fatale_AOI}, this schedule $\mathcal{A}$ is then further upper bounded by a ``worst-case'' schedule $\mathcal{B}$ that groups all the uses of the worst channel together. However, for the problem at hand, as the expected error covariance can increase exponentially fast if the worst channel does not stabilize the remote estimator, this argument will give the desired logarithmic regret bound only in problem instances where the worst channel (and hence all channels) is stabilizing. To cover more interesting and challenging situations, to establish Theorem~\ref{thm:regret_bound_general}, we will consider a division of the time interval into ``epochs'' separated by successive uses of the worst channel, see Fig.~\ref{fig:epochs}. Through a careful analysis, we will show that the accumulated regret within each epoch is bounded, while the expected number of epochs is logarithmic in $T$. As a consequence the required logarithmic regret upper bound is established. 
The detailed proof of Theorem \ref{thm:regret_bound_general} now follows.

\begin{proof}
We will first prove that $\textnormal{regret}^E(T)  = O (\log T)$.
Define the reception probability of the worst channel as $$\theta_w \triangleq \min_m \theta_m.$$ 
Given any schedule of channel selections, let schedule $\mathcal{A}$ denote an alternative schedule which replaces all uses of sub-optimal channels with the worst channel. Let $\mathbb{E}[P_{k}^\mathcal{A}] $ be the expected error covariance under this replacement procedure.  Then from \eqref{eqn:EP_recursion} and Lemma \ref{lemma:h_properties}, we can easily show that $\mathbb{E}[P_k] \leq \mathbb{E}[P_k^\mathcal{A}], \, \forall k.$
In particular,
\begin{equation}
\label{eqn:EPk_sum_bound}
\sum_{k=1}^T \textnormal{tr} \mathbb{E}[P_k] \leq \sum_{k=1}^T \textnormal{tr} \mathbb{E}[P_k^\mathcal{A}].    
\end{equation}
Then, we have
\begin{equation}\label{eq:regretA}
\text{regret}^E(T)\leq \text{regret}^{\mathcal{A}}(T) \triangleq \sum_{k=1}^{T} \tr\left(\myexpect{P^{\mathcal{A}}_k} - \myexpect{P^*_k} \right).
\end{equation}
Thus, we only need to prove that $\text{regret}^{\mathcal{A}}(T)=O(\log T)$.

Let $k^w_i \in \mathbb{N}, i \geq 1$ denote the time index of the $i$th usage of the worst channel based on schedule $\mathcal{A}$.
The  time horizon is divided by $\{k^w_i\}$ into epochs, as illustrated in Fig.~\ref{fig:epochs}.
The $i$th epoch starts from $k^w_i$ and ends before $k^w_{i+1}$, and is of epoch length $\ell_i \triangleq k^w_{i+1} - k^w_{i} \geq 1$. Note that $\ell_i$ is a random process with a time-varying distribution due to the bandit algorithms.
We define 
\begin{equation}\label{eq:Delta_w}
\Delta^w_i \triangleq \myexpect{P^{\mathcal{A}}_{k^w_i}} - \myexpect{P^*_{k^w_i}} 
\end{equation}
as the gap between the expected estimation error covariances achieved by schedule $\mathcal{A}$ and the persistent schedule of the best channel.

Let  $k^b_{i,j} \in \mathbb{N}, i,j \geq 1$ denote the time  index based on schedule $\mathcal{A}$ of the $j$th usage of the best channel  in epoch $i$, when the epoch length $\ell_i>1$ (see Fig.~\ref{fig:epochs}).
Let $k^b_{0,j} \in \mathbb{N}, j \geq 1$ denote the time  index of the $j$th usage of the best channel before the first epoch and $\ell_0$ denote the total number of time slots before the first epoch.
Then, we define 
\begin{equation}\label{eq:Delta_b}
\Delta^b_{i,j} \triangleq \myexpect{P^{\mathcal{A}}_{k^b_{i,j}}} - \myexpect{P^*_{k^b_{i,j}}}, i\geq 0, j \geq 1, \text{ if } \ell_i>1.
\end{equation}
In particular, we have $\Delta^b_{i,j} =0$ when $i=0$, since schedule $\mathcal{A}$ is identical to the persistent schedule of the best channel before the first epoch.

We use random variable $N_w$ to denote the number of uses of the worst channel within a time interval of length $T$. Then, $\text{regret}^{\mathcal{A}}(T)$ in \eqref{eq:regretA} can be upper bounded as 
\begin{equation}\label{eq:regretA2}
\begin{aligned}
\text{regret}^{\mathcal{A}}(T) \!&\leq \!
\mathbb{E}\!\! \left[\sum_{j=1}^{\ell_0-1}\!\!\tr\Delta^b_{0,j} + \!\sum_{i=1}^{N_w} \left(\tr\Delta^w_i+\sum_{j=1}^{\ell_i-1}\!\tr\Delta^b_{i,j}\right) \right]\!\!\\
&= \mathbb{E}\!\! \left[\!\sum_{i=1}^{N_w} \left(\tr\Delta^w_i+\sum_{j=1}^{\ell_i-1}\!\tr\Delta^b_{i,j}\right) \right],
\end{aligned}
\end{equation}
where the expectation is over $N_w$ and $\{\ell_i\}$.
To prove the desired result, we will analyze $\left(\tr\Delta^w_i+\sum_{j=1}^{\ell_i-1}\tr\Delta^b_{i,j}\right)$ first.

\emph{Analysis of per epoch regret}: By applying the estimation error covariance updating rule \eqref{remote_estimator_eqns} into \eqref{eq:Delta_b}, we have
\begin{equation}\label{eq:Delta_b_analysis1}
\begin{aligned}
\Delta^b_{i,j} &=  (1-\theta^*) h(\myexpect{P^{\mathcal{A}}_{k^b_{i,j}-1}}) + \theta^* \overline{P} \\
& \quad - \left((1-\theta^*) h(\myexpect{P^{*}_{k^b_{i,j}-1}}) + \theta^* \overline{P}\right)\\
&=(1-\theta^*) A \left(\myexpect{P^{\mathcal{A}}_{k^b_{i,j}-1}} - \myexpect{P^{*}_{k^b_{i,j}-1}}\right)  A^\top.
\end{aligned}
\end{equation}
For the time slots in epoch $i$ when $\ell_i>1$, it directly follows that
\begin{equation}\label{eq:Delta_b_analysis2}
\Delta^b_{i,j} = (1-\theta^*)^{j} A^{j} \Delta^w_{i} (A^{j})^\top, 1\leq j \leq \ell_i-1.
\end{equation}
Applying the trace operator to inequality \eqref{eq:Delta_b_analysis2}, it can be obtained that
\begin{align}\label{eq:trace_w_1}
\text{tr} \Delta^b_{i,j}
&= (1-\theta^*)^{j} \text{tr}  \left(A^{j} \Delta^w_{i} (A^{j})^\top\right)\\
\label{eq:trace_w_2}
&\leq  (1-\theta^*)^{j} \text{tr}  ((A^{j})^\top A^{j}) \text{tr}\Delta^w_{i}\\
\label{eq:trace_w_3}
&\leq   \kappa_1(\epsilon)  (1-\theta^*)^{j}  (\rho(A)+\epsilon)^{2j}
\text{tr}\Delta^w_{i}
\end{align}
where $\epsilon > 0$ is arbitrary and $\kappa_1 (\epsilon) > 0$. Inequality \eqref{eq:trace_w_2} is due to the property that $\text{tr}(UV) \leq \text{tr}(U) \text{tr}(V)$ for any positive semidefinite matrices $U$ and $V$, see e.g. \cite[p.445]{HornJohnson}.
Inequality \eqref{eq:trace_w_3} is due to the fact that $\text{tr}  ((A^{j})^\top A^{j})$ is the sum of squares of all elements of $A^{j}$, and is obtained based on the lemma below about the element-wise bounds of matrix powers~\cite[Lemma 2]{liu2020remote}:
\begin{lemma}\label{lem:Liu}
	Consider a $z$-by-$z$ matrix $Z$, and let $[Z]_{i,i'}$ denote the element at the $i$th row and $i'$th column of $Z$. \par
	(i) For any $\epsilon>0$, there exists a $\kappa_1(\epsilon)>0$ such that $\big|[Z^j]_{i,i'}\big|^2 \leq \kappa_1(\epsilon) (\rho(Z)+\epsilon)^{2j}, \forall i,i'\in \{1,\dots,z\}$.\par
	(ii) There exist $\kappa_2>0$ and $L\in\mathbb{N}$ such that $\big|[Z^j]_{i,i'}\big|^2$ is periodically lower bounded by $\kappa_2 \rho(Z)^{2j}$ with period $L$. 
\end{lemma}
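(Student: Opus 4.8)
The plan is to reduce $Z$ to Jordan canonical form and then read off, respectively, the upper and the lower bound on the entries of $Z^j$ from the powers of the Jordan blocks. Write $Z = S J S^{-1}$, where $J$ is block-diagonal with Jordan blocks $J_s = \lambda_s I + N_s$, each $N_s$ nilpotent of size $d_s$. Since $J_s^{\,j} = \sum_{t=0}^{d_s-1}\binom{j}{t}\lambda_s^{\,j-t}N_s^{\,t}$, every entry of $Z^j$ is a fixed finite linear combination (with coefficients determined by $S$ and $S^{-1}$) of terms $\binom{j}{t}\lambda_s^{\,j-t}$.

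For part (i) I would bound each such term in modulus by $\binom{j}{t}\rho(Z)^{\,j-t}\le C\, j^{\,d-1}\rho(Z)^{\,j}$, where $d=\max_s d_s$ and $C$ collects the bounded factors coming from $S,S^{-1}$ and $\rho(Z)^{-t}$. The only substantive point is that a polynomial factor is dominated by any strictly larger exponential: for $\rho(Z)>0$ one has $j^{\,d-1}\rho(Z)^{\,j}/(\rho(Z)+\epsilon)^{\,j}=j^{\,d-1}/(1+\epsilon/\rho(Z))^{\,j}\to 0$, so this ratio is bounded uniformly in $j$ by some $\kappa_1(\epsilon)$, giving $|[Z^j]_{i,i'}|^2\le\kappa_1(\epsilon)(\rho(Z)+\epsilon)^{2j}$. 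The degenerate case $\rho(Z)=0$ (so $Z$ is nilpotent and $Z^j=0$ for $j\ge z$) is handled by enlarging $\kappa_1(\epsilon)$ to cover the finitely many nonzero powers. A cleaner alternative for (i) is to invoke the standard fact that for each $\epsilon>0$ there is an induced norm $\|\cdot\|_{\epsilon}$ with $\|Z\|_{\epsilon}\le\rho(Z)+\epsilon$, whence $\|Z^j\|_{\epsilon}\le(\rho(Z)+\epsilon)^{j}$, and then pass to entrywise bounds using equivalence of norms on $\mathbb{R}^{z\times z}$.

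For part (ii) the starting observation is the clean global bound $\sum_{i,i'}|[Z^j]_{i,i'}|^2=\|Z^j\|_F^2\ge\|Z^j\|_2^2\ge\rho(Z^j)^2=\rho(Z)^{2j}$, valid for every $j$; by pigeonhole over the $z^2$ entries, the largest entry already satisfies $\max_{i,i'}|[Z^j]_{i,i'}|^2\ge\rho(Z)^{2j}/z^2$ for all $j$. To upgrade this to a lower bound on a single fixed entry I would isolate, in the Jordan expansion above, the contribution of the dominant eigenvalues $\{\lambda_s:|\lambda_s|=\rho(Z)\}$ and the largest block size $d^\ast$ among them; for large $j$ the leading asymptotics of $[Z^j]_{i,i'}$ is $\rho(Z)^{j}$ times a trigonometric-polynomial coefficient of the form $\sum_{s}c_{s,i,i'}\,j^{\,d^\ast-1}e^{\,\mathrm{i}\,j\arg\lambda_s}$, the remaining terms being of strictly lower order. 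I then fix an entry $(i^\ast,i'^\ast)$ at which the spectral projection onto the dominant eigenspace has a nonzero component (such an entry exists since that projection is a nonzero matrix) and show its modulus is bounded away from zero along a set of indices of bounded gap $L$: when the arguments $\arg\lambda_s$ are rational multiples of $2\pi$ the coefficient is exactly periodic in $j$, with period $L$ the least common multiple of the denominators, so the bound holds on a fixed residue class; otherwise Weyl equidistribution of $\{j\arg\lambda_s\bmod 2\pi\}$ guarantees that the favourable phase recurs within every window of some length $L$. Absorbing the factor $j^{\,d^\ast-1}\ge 1$ then yields $|[Z^j]_{i^\ast,i'^\ast}|^2\ge\kappa_2\,\rho(Z)^{2j}$ on that periodic index set.

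I expect the genuine difficulty to lie entirely in part (ii), and specifically in the phase behaviour of the dominant eigenvalues. For a real matrix (as in our application, where $Z=A$) complex dominant eigenvalues occur in conjugate pairs $\rho(Z)e^{\pm\mathrm{i}\phi}$, so $[Z^j]_{i^\ast,i'^\ast}$ is essentially $2r\,\rho(Z)^{j}\cos(j\phi+\psi)$ and passes arbitrarily close to zero; consequently no per-entry lower bound can hold for \emph{every} $j$, and the periodic (bounded-gap) formulation is the most one can hope for. Making the recurrence of the favourable phase precise — distinguishing rational from irrational rotation and invoking equidistribution in the latter case — together with verifying non-degeneracy of the chosen entry, is the crux of the argument, whereas the upper bound in (i) is routine.
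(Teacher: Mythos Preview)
The paper does not prove this lemma at all; it is quoted verbatim from an external source (cited in the paper as \cite[Lemma~2]{liu2020remote}) and invoked as a black box inside the proof of Theorem~\ref{thm:regret_bound_general}, where in fact only part~(i) is used. So there is no in-paper argument to compare against, and your proposal stands on its own.

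Your treatment of (i) is correct and standard; either the Jordan-form expansion with the polynomial-versus-exponential estimate, or the induced-norm argument $\|Z\|_\epsilon\le\rho(Z)+\epsilon$ followed by norm equivalence, gives the claim immediately.

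For (ii) your outline has the right structure, but be careful about one point. The lemma as stated promises a \emph{periodic} lower bound with some period $L$, i.e.\ the inequality holds along an arithmetic progression. When the dominant eigenvalues have arguments that are rational multiples of $2\pi$ your argument gives exactly that. In the irrational case, however, Weyl equidistribution only tells you that the set of ``good'' indices is syndetic (bounded gaps), not that it is an arithmetic progression; indeed for a single pair $\rho e^{\pm i\phi}$ with $\phi/\pi$ irrational the zeros of $j\mapsto\cos(j\phi+\psi)$ are equidistributed and the good set cannot be exactly periodic. So either the word ``periodically'' in the cited lemma is being used loosely to mean ``with bounded gap $L$'' (which is all the downstream application would need, and which your argument does deliver), or the original reference carries an extra hypothesis such as a real or unique dominant eigenvalue, for which $L=1$ works trivially. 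You have correctly located the only nontrivial issue; just flag that strict periodicity is not attainable in the generic irrational case, so the precise reading of the statement matters.
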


From assumption \eqref{best_channel_condition}, we can find a sufficiently small $\epsilon$ such that $(1-\theta^*)(\rho(A)+\epsilon)^2<1$.
Using the inequality~\eqref{eq:trace_w_3}, we obtain  
\begin{equation}\label{eq:sum_trace_b}
\sum_{j=1}^{\ell_i-1}\tr\Delta^b_{i,j} \leq  \frac{\kappa_1(\epsilon) \text{tr}\Delta^w_{i}}{1- (1-\theta^*) (\rho(A)+\epsilon)^2}.
\end{equation}
Therefore, the regret of epoch $i$ is bounded as
\begin{equation}\label{eq:Deltawb_bound}
\tr\Delta^w_i+\sum_{j=1}^{\ell_i-1}\!\tr\Delta^b_{i,j} \leq \kappa(\epsilon) \tr\Delta^w_i,
\end{equation}
where $\kappa(\epsilon) \triangleq 1+ \frac{\kappa_1(\epsilon)}{1- (1-\theta^*) (\rho(A)+\epsilon)^2}  >0 $.

\emph{Analysis of $\textnormal{tr}\Delta^w_{i}$}:
By applying the estimation error covariance updating rule \eqref{remote_estimator_eqns} into \eqref{eq:Delta_w}, we have
\begin{equation} \label{eq:Delta_w_analysis}
\begin{aligned}
\Delta^w_i 
&= (1-\theta_w)h(\myexpect{P^{\mathcal{A}}_{k^w_i-1}}) + \theta_w \overline{P} \\
& \quad - \left((1-\theta^*)h(\myexpect{P^{*}_{k^w_i-1}}) + \theta^* \overline{P} \right) \\
&\leq  (1-\theta_w)h(\myexpect{P^{\mathcal{A}}_{k^w_i-1}}) + \theta_w \overline{P} \\
&= (1-\theta_w)h(\myexpect{P^{\mathcal{A}}_{k^w_i-1}}-\myexpect{P^{*}_{k^w_i-1}}) \\ & \quad + (1-\theta_w) h(\myexpect{P^{*}_{k^w_i-1}}) + \theta_w \overline{P} \\
&\leq  (1-\theta_w)A(\myexpect{P^{\mathcal{A}}_{k^w_i-1}}-\myexpect{P^{*}_{k^w_i-1}})A^\top + G
\end{aligned}
\end{equation}
where $G\triangleq(1-\theta_w) \left(A \mathcal{P} A^\top + Q \right) + \theta_w \overline{P} $, and $\mathcal{P}$ is a bound on $\mathbb{E}[P^*_k]$ which exists due to the stability condition \eqref{best_channel_condition}.

Next, using a similar argument as in the proof of Theorem~\ref{thm:stability_TS}, we note that the expected error covariances $\{\myexpect{P^\mathcal{A}_k}\}$ are bounded by a positive semidefinite matrix $\mathcal{P}'$ for all $k$. From \eqref{eq:Delta_w_analysis}, it directly follows that
$\Delta^w_i \leq (1-\theta_w)A \mathcal{P}' A^\top + G$. Then, we can find a positive constant $\xi$ such that
\begin{equation}\label{eq:Deltaw_bound}
\tr \Delta^w_i \leq \xi, \forall i.
\end{equation}

By taking \eqref{eq:Deltawb_bound} and \eqref{eq:Deltaw_bound} into \eqref{eq:regretA2} and using Lemma \ref{lemma:suboptimal_arms}, 
$$
\begin{aligned}
\text{regret}^E(T) &\leq \text{regret}^{\mathcal{A}}(T)  \\
&\leq 
\mathbb{E}\!\!\left[\sum_{i=1}^{N_w} \kappa(\epsilon) \xi\right]= \kappa(\epsilon) \xi \myexpect{N_w} = O(\log T).
\end{aligned}
$$

To conclude, we now prove that $\textnormal{regret}^E(T)  = \Omega (\log T)$. We have
\begin{align*}
\textnormal{regret}^E(T) & =  \sum_{k=1}^{T} \textnormal{tr} (\mathbb{E}[P_k] -  \mathbb{E}[P_k^*]) \\
& = \mathbb{E} \bigg[\sum_{k=1}^{T} \textnormal{tr} (\mathbb{E}[P_k] -  \mathbb{E}[P_k^*]) \mathds{1}(\theta(k) = \theta^*) \\
& \quad + \sum_{k=1}^{T} \textnormal{tr} (\mathbb{E}[P_k] -  \mathbb{E}[P_k^*]) \mathds{1}(\theta(k) \neq \theta^*) \bigg] \\
& \geq \mathbb{E} \bigg[ \sum_{k=1}^{T} \textnormal{tr} (\mathbb{E}[P_k] -  \mathbb{E}[P_k^*]) \mathds{1}(\theta(k) \neq \theta^*) \bigg].
\end{align*}
Let $\theta^\circ$ denote the second largest packet reception probability. When $\theta(k) \neq \theta^*$, we can show similar  to \eqref{eqn:r_epsilon_greedy}  that 
\begin{align*}
\textnormal{tr} (\mathbb{E}[P_k]-\mathbb{E}[P_k^*]) 
& \geq \textnormal{tr} \big( (\theta^* - \theta^\circ) ( h(\mathbb{E}[P_{k-1}^*]) - \overline{P} )\big) \nonumber \\
& \geq (\theta^* - \theta^\circ) \min_{k \geq 1} \textnormal{tr}\big( h(\mathbb{E}[P_{k-1}^*]) - \overline{P}\big) \nonumber \\
&\triangleq r^\circ > 0.
\end{align*}
Thus 
\begin{align*}
\textnormal{regret}^E(T) & \geq r^\circ \mathbb{E} \bigg[ \sum_{k=1}^T \mathds{1}(\theta(k) \neq \theta^*) \bigg] \\ 
& = r^\circ \mathbb{E}[N_{\textnormal{sub}}(T)] \\
& = \Omega (\log T),
\end{align*}
where the last line follows from Lemma \ref{lemma:suboptimal_arms}. 
This completes the proof that $\textnormal{regret}^E(T)  = \Omega (\log T)$, and hence establishes Theorem \ref{thm:regret_bound_general}.
\end{proof}

\begin{remark}
We note that some works on learning based control have made  assumptions that the underlying system is stable \cite{Lale}, or can be stabilized for all possible controls \cite{OuyangGagraniJain}, when carrying out a regret analysis. This stands in contrast to  Theorem~\ref{thm:regret_bound_general}, stated above, which merely  requires that at least one of the $M$ channels is stabilizing for the remote estimator. Our result thus considers scenarios where some of the sub-optimal channels can cause the expected error covariance to diverge if they are used too often. 
\end{remark}

\section{Numerical Studies}
\label{sec:numerical}

\begin{figure}[t!]
\centering 
\includegraphics[scale=0.3]{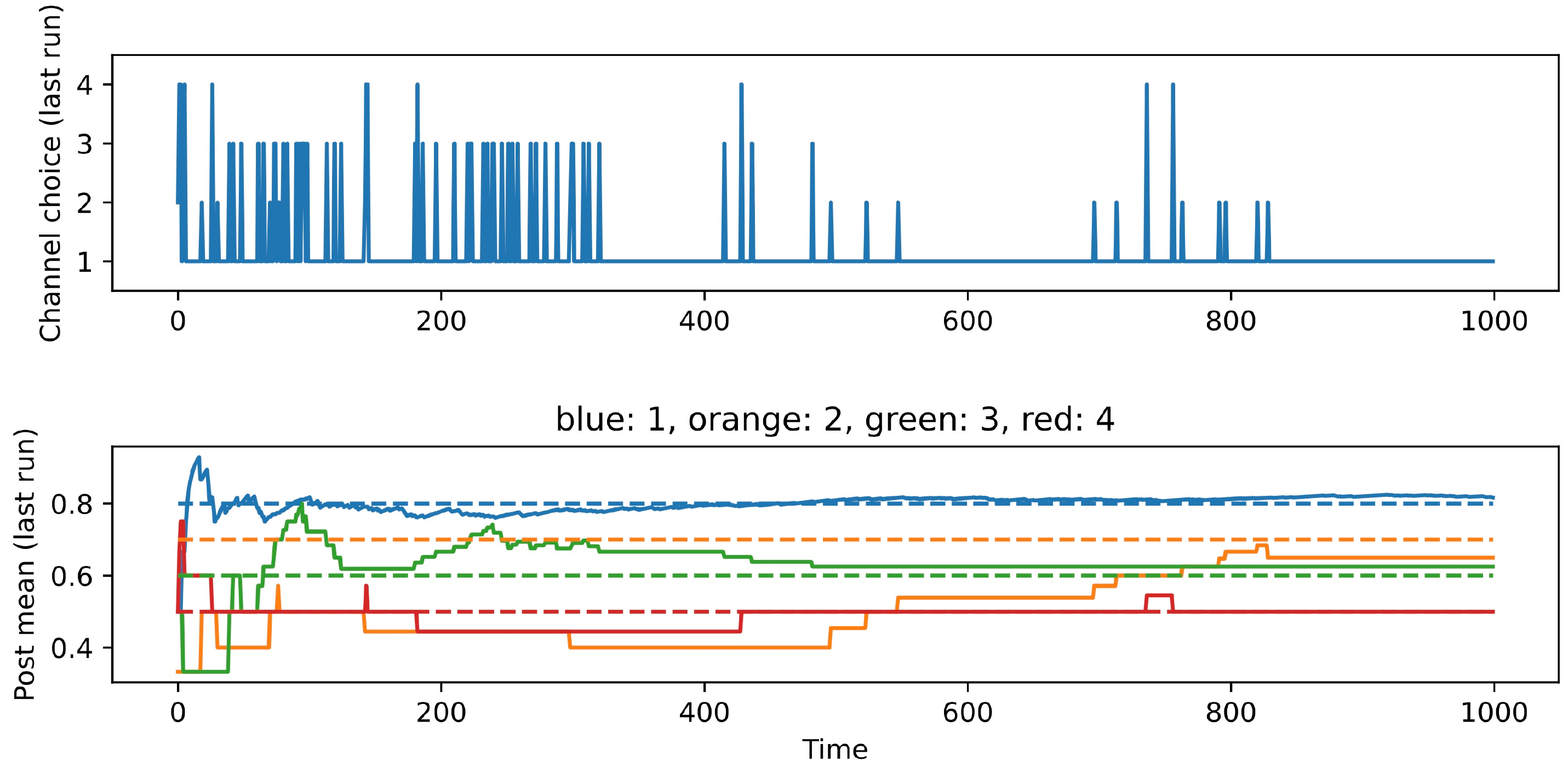} 
\caption{Learning using Stability-aware Bayesian Sampling}
\label{fig:learning_plot}
\end{figure} 

We consider a situation with $M=4$ channels. Figure~\ref{fig:learning_plot} illustrates how Stability-aware Bayesian sampling (SBS) learns, and increasingly uses, the optimal channel, for a situation where 
$$  A = \begin{bmatrix} 1.5 & 0.2 \\ 0.3 & 0.9\end{bmatrix}, \quad C = \begin{bmatrix} 1 & 1\end{bmatrix}, \quad Q=I, \quad R=1$$
and the (unknown) channel success probabilities are given by
$$\bm{\theta}=(\theta_1, \theta_2, \theta_3, \theta_4) =(0.8, \; 0.7, \; 0.6, \; 0.5).$$
Table \ref{table:algorithm_comparison} illustrates the empirical estimation regret for a number of different channel probabilities $\bm{\theta}=(\theta_1, \theta_2, \theta_3, \theta_4)$ and system matrices $A$, $C$, $Q$, $R$. All bandit algorithms studied in this work, namely $\varepsilon$-greedy, Thompson sampling (TS), OBS, and SBS, are considered. The regret is computed over horizon $T = 1000$, and averaged over 100000 runs. For $\varepsilon$-greedy, we considered different values of $\varepsilon$ in steps of 0.02, with the value giving the smallest simulated regret presented in Table~\ref{table:algorithm_comparison}.

\begin{table*}[t!]
\caption{Estimation regret of different algorithms}
\centering
\begin{tabular}{|c|c|c|c|c|c|c||c|c|c|c|} \hline
$\bm{\theta}$ & $A$ & $C$ & $Q$ & $R$ & $\theta_c$ & $\varepsilon$ & $\varepsilon$-greedy & TS & OBS & SBS \\ \hline \hline 
$(0.8, 0.75, 0.55, 0.5)$ & 1.45  & 1 & 1 & 1 & 0.524 & 0.12 & 263 & 143 & 136 & \textbf{135} \\ \hline 
$(0.7, 0.6, 0.5, 0.4)$ & 1.45  & 1 & 1 & 1 & 0.524 & 0.18 & 996 & 679 & \textbf{596} & 603 \\
\hline
$(0.6, 0.5, 0.4, 0.3)$ & 1.45  & 1 & 1 & 1 & 0.524 & 0.22 & 10319 & 8253 & 6823 & \textbf{6276} \\
\hline
$(0.7, 0.6, 0.4, 0.3)$ & $\begin{bmatrix} 1.2 & 0.1 \\ 0.2 & 1.1 \end{bmatrix} $ & $\begin{bmatrix} 1 & 1\end{bmatrix}$ & $I$ & 1 & 0.408 & 0.10 & 582 & 295 & 274 & \textbf{273} \\ \hline
$(0.65, 0.55, 0.45, 0.35)$ & $\begin{bmatrix} 1.2 & 0.1 \\ 0.2 & 1.1 \end{bmatrix} $ & $\begin{bmatrix} 1 & 1\end{bmatrix}$ & $I$ & 1 & 0.408 & 0.12 & 886 & 517 & \textbf{473} & 483 \\ \hline
$(0.55, 0.45, 0.35, 0.25)$ & $\begin{bmatrix} 1.2 & 0.1 \\ 0.2 & 1.1 \end{bmatrix} $ & $\begin{bmatrix} 1 & 1\end{bmatrix}$ & $I$ & 1 & 0.408 & 0.18 & 4723 & 3424 & 2727 & \textbf{2430} \\ \hline
$(0.9, 0.8, 0.7, 0.5)$ & $\begin{bmatrix} 1.5 & 0.2 \\ 0.3 & 0.9 \end{bmatrix} $ & $\begin{bmatrix} 1 & 1\end{bmatrix}$ & $I$ & 1 & 0.603 & 0.14 & 290 & 104 & 98 & \textbf{97} \\ \hline
$(0.8, 0.7, 0.6, 0.5)$ & $\begin{bmatrix} 1.5 & 0.2 \\ 0.3 & 0.9 \end{bmatrix} $ & $\begin{bmatrix} 1 & 1\end{bmatrix}$ & $I$ & 1 & 0.603 & 0.18 & 803 & 403 & \textbf{354} & 363 \\ \hline
$(0.7, 0.6, 0.5, 0.4)$ & $\begin{bmatrix} 1.5 & 0.2 \\ 0.3 & 0.9 \end{bmatrix} $ & $\begin{bmatrix} 1 & 1\end{bmatrix}$ & $I$ & 1 & 0.603 & 0.22 & 7185 & 6178 & 3903 & \textbf{3106} \\ \hline
\end{tabular}
\label{table:algorithm_comparison}
\end{table*}

As expected, the $\varepsilon$-greedy algorithm is outperformed by the sampling based algorithms.   
In all the considered scenarios OBS and SBS can achieve at least some improvement in performance over Thompson sampling.\footnote{Note that when looking at individual simulation runs, we cannot say that any one scheme will outperform any other.}  In many scenarios the performance of OBS and SBS are quite close to each other. However in more challenging situations, where there are only few stabilizing channels (just one or two), SBS seems to be able to outperform OBS. This could be due to the fact that  in these cases using non-stabilizing channels more than necessary can significantly degrade performance.

In Fig. \ref{fig:regret_comparison_plot} we plot the estimation regret over time for the $\varepsilon$-greedy and Thompson sampling algorithms. Plots for OBS and SBS are qualitatively similar to Thompson sampling and are omitted. The parameters used are the same as those mentioned at the start of Section \ref{sec:numerical}.  We see a linear scaling (at larger times) for $\epsilon$-greedy and a logarithmic scaling in the case of Thompson sampling, in agreement with Theorems \ref{thm:regret_bound_epsilon_greedy} and \ref{thm:regret_bound_general}. 
\begin{figure}[t!]
\centering 
\includegraphics[scale=0.5]{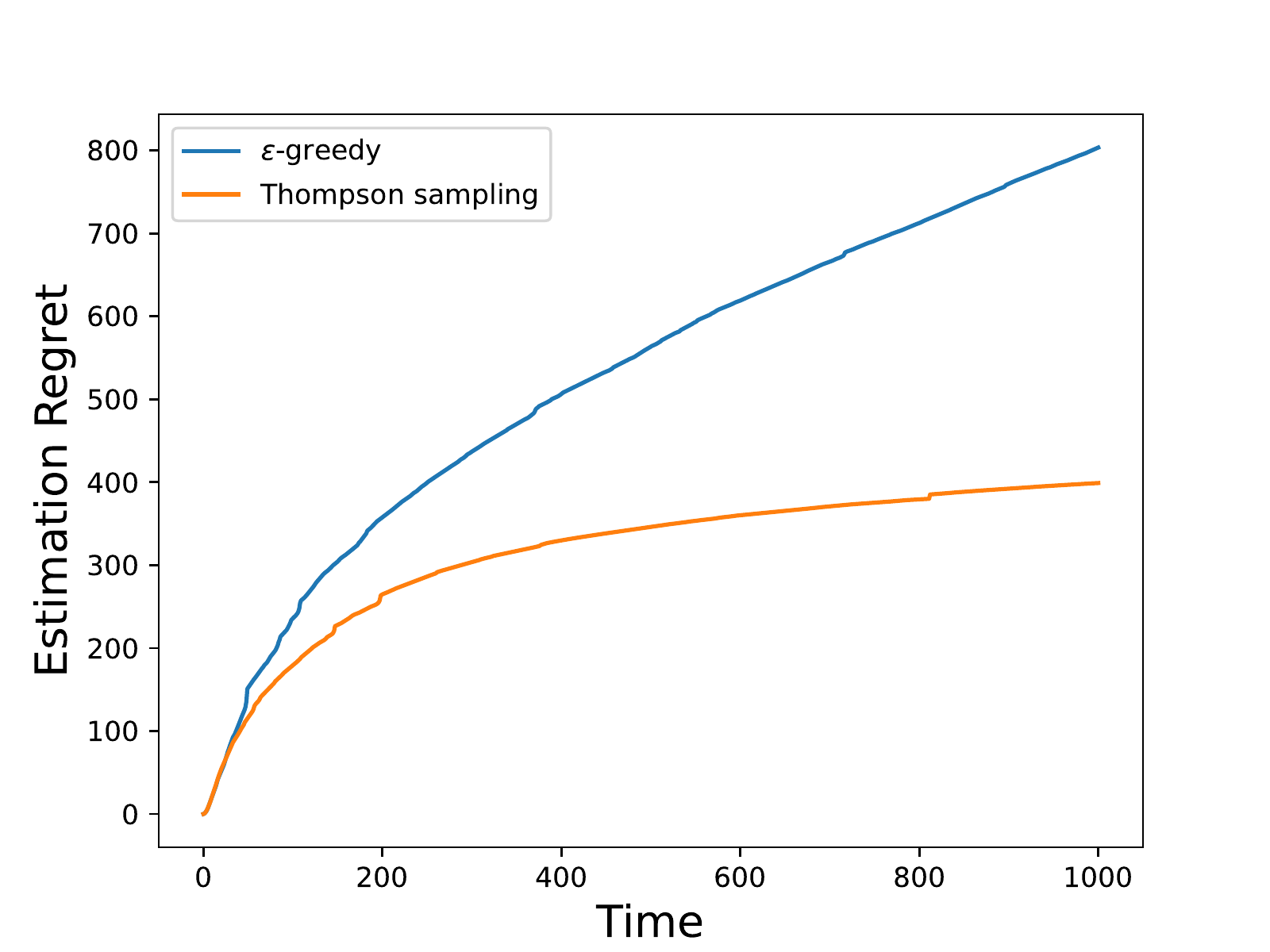} 
\caption{Estimation regret over time}
\label{fig:regret_comparison_plot}
\end{figure}

\section{Conclusion}
\label{sec:conclusion}
We have studied a remote state estimation problem where a sensor can choose between a number of channels with unknown channel statistics, with which to transmit over. We have made use of bandit algorithms to learn the statistics of the channels, while simultaneously carrying out the estimation procedure. Stability of the estimator using these algorithms has been shown, and bounds on the estimation regret have been derived.  
Future work will include the use of bandit algorithms in the allocation of channels to multiple processes.  



\bibliography{IEEEabrv,sensor_scheduling,control}
\bibliographystyle{IEEEtran}

\end{document}